\newcommand{\keywords}[1]{\par\noindent\textbf{Keywords:}%
 #1\par}
\newcommand{\jel}[1]{\par\noindent\textbf{JEL classification codes:}%
 #1\par}
\newcommand{\rnd}[1]{\left( #1 \right)}
\newcommand{\crly}[1]{\left\{ #1 \right\}}
\newcommand{\sqr}[1]{\left[ #1 \right]}
\newcommand{\rndsqr}[1]{\left( #1 \right]}\newcommand{\sqrrnd}[1]{\left[ #1 \right)}
\newcommand{\set}[2]{\left\{ #1 : #2 \right\}}
\newcommand{\map}[3]{#1 : #2 \to #3}
\newcommand{\card}[1]{\left| #1 \right|}
\newcommand{\ch}{\mathrm{CH}}
\newcommand{\sh}{\mathrm{SH}}
\newtheorem{proposition}{Proposition}
\newtheorem{theorem}{Theorem}
\newtheorem{lemma}{Lemma}
\newtheorem{axiom}{Axiom}
\theoremstyle{definition}
\newtheorem{definition}{Definition}
\theoremstyle{remark}
\title
{
 Strategies in Deterministic Totally-Ordered-Time Games\thanks
 {
  This version is extreee$\cdots$eeemely preliminary.
%
 }
}
\author
{
 Tomohiko Kawamori\thanks
 {
  Faculty of Economics, Meijo University,
  1-501 Shiogamaguchi, Tempaku-ku, Nagoya 468-8502, Japan.
  {\tt kawamori@meijo-u.ac.jp}
 }
}
\date
{
}
\begin{document}

\maketitle

\abstract
{
 We consider deterministic totally-ordered-time games.
 We present three axioms for strategies.
 We show that
 for any tuple of strategies that satisfy the axioms,
 there exists a unique complete history that is consistent with the strategy tuple.
}

\keywords
{
 deterministic totally-ordered-time game;
 strategy;
 unique existence of consistent complete history;
 well-ordered set
}

\jel
{
 C72;
 C73
}

\newpage
\section{Introduction}\label{sec:introduction}

In continuous-time games,
some strategy tuples may fail to induce a unique complete history (path of play).
Some strategy tuples may induce no complete history
(see Example 1 in \cite{Kamada2021}).
Some strategy tuples may induce multiple complete histories
(see Example 2 in \cite{Kamada2021}).
Thus,
each strategy tuple does not necessarily specify a unique payoff tuple.

Several papers proposed restrictions imposed on strategies in order that
each strategy tuple induces a unique complete history.
Any strategy $\sigma_i$ of any player $i$ is restricted in the literature as follows.
In \cite{Bergin1993},
in any subgame,
there exists a small initial interval during which player $i$ does not change his/her action in any complete history consistent with strategy $\sigma_i$
(\emph{inertiality}).
In \cite{Kamada2021},
in any subgame,
given any path of the other players' action tuples,
there exists a complete history consistent with $\sigma_i$
(\emph{traceability}),
and
in any subgame,
player $i$ moves only finite times during any finite-length interval in any complete history consistent with $\sigma_i$
(\emph{frictionality}).
Inertiality does not necessarily imply traceability and frictionality,
and
vice versa.

This paper proposes a restriction on strategies that
makes each strategy tuple to induce a unique complete history
in totally-ordered-time games.
The restriction imposed on any strategy $\sigma_i$ of any player $i$ consists of three components.
The first is traceability in \cite{Kamada2021}.
The second is that
in any subgame,
in any complete history consistent with strategy $\sigma_i$,
when times are partitioned into intervals during which
player $i$'s actions are constant,
any set of some of these intervals has the earliest interval.
The third is that
in any subgame,
any two complete histories consistent with strategy $\sigma_i$ coincide during a sufficiently small initial interval.
We show that
each tuple of strategies satisfying this restriction induces a unique complete history.

The restriction proposed by this paper is weaker than the restrictions in the existing literature.
If the inertiality in \cite{Bergin1993} is satisfied,
or
if the traceability and frictionality in \cite{Kamada2021} are satisfied,
the restriction in this paper is satisfied.
Both strategies satisfying the inertiality and strategies satisfying the traceability and frictionality
are natural
and
should be given each player.
Thus,
a restriction weaker than the both restriction is needed.
This paper
responds to this need
and
provides a generic restriction on strategies.

Totally-ordered-time games defined by this paper include continuous-time games and discrete-time games in the existing literature.
This paper's restriction does not restrict strategies in well-ordered time games,
which include discrete-time games.
Each strategy tuple induces a unique complete history with no restriction in discrete-time games,
and
thus,
restrictions imposed on strategies in totally-ordered-time games are required to degenerate into no restriction in the case of discrete-time games.
This paper's restriction satisfies this requirement.

This paper considers deterministic situations.
\cite{Bergin1993} considered deterministic situations,
whereas \cite{Kamada2021} considered stochastic situations.
Measurability of strategy tuples
does not matter in the former situations
but
matters in the latter situations.
This paper
focuses on deterministic situations
and
does not consider measurability of strategy tuples.

The remainder of this paper is organized as follows.
Section \ref{sec:model} describes the model.
Section \ref{sec:results} presents the results.
Section \ref{sec:literature} discusses relation to the literature.
Section \ref{sec:conclusion} concludes the paper.

\section{Model}\label{sec:model}

Let $\rnd{N,\rnd{T,\leq},\rnd{A_i}_{i \in N},\rnd{u_i}_{i \in N}}$ be a quadruple as follows.
\begin{itemize}
 \item
 $N$ is a nonempty finite set.
 $N$ represents the set of players.
 \item
 $\rnd{T,\leq}$ is a totally ordered set such that
 $T$ has a minimum,
 any nonempty $S \subset T$ has an infimum,
 and
 any nonempty $S \subset T$ bounded from above has a supremum.
 $T$ represents the set of times.
 \item
 For any $i \in N$,
 $A_i$ is a nonempty set.
 $A_i$ represents the set of player $i$'s actions.
 \item
 For any $i \in N$,
 $\map{u_i}{\rnd{\prod_{j \in N} A_j}^T}{\mathbb R}$.
 $\prod_{j \in N} A_i$,
 $\rnd{\prod_{j \in N} A_i}^T$
 and
 $u_i$
 represent
 the set of action tuples at each time,
 the set of complete histories
 and
 player $i$'s payoff function,
 respectively.
\end{itemize}

Introduce additional notations as follows.
\begin{itemize}
 \item
 Let $<$ ($\geq$; $>$, resp.)\ be the binary relation on $T$ such that
 for any $t,s \in T$,
 $\rnd{t < s} \leftrightarrow \rnd{t \leq s} \wedge \rnd{t \neq s}$
 ($\rnd{t \geq s} \leftrightarrow \rnd{s \leq t}$;
 $\rnd{t > s} \leftrightarrow \rnd{t \geq s} \wedge \rnd{t \neq s}$,
 resp.).
 \item
 For
 any $R \in \crly{\leq,<,\geq,>}$
 and
 any $t \in T$,
 let
 $T_{\mathrel{R} t} := \set{s \in T}{s \mathrel{R} t}$.
 \item
 For any $t \in T$,
 let $T^t := \set{s \in T}{s \geq t}$.
 \item
 For any $t,s \in T$,
 let
 $\rnd{t,s} := T_{> t} \cap T_{< s}$,
 $\rndsqr{t,s} := T_{> t} \cap T_{\leq s}$,
 $\sqrrnd{t,s} := T_{\geq t} \cap T_{< s}$
 and
 $\sqr{t,s} := T_{\geq t} \cap T_{\leq s}$.
 \item
 Let $H := \rnd{\prod_{i \in N} A_i}^T$.
 \item
 For
 any $h \in H$,
 any $i \in N$,
 any $S \subset T$
 and
 any $t \in T$,
 let
 \begin{itemize}
  \item
  $\map{h_i}{T}{A_i}$ such that
  for any $s \in T$,
  $h_i\rnd{s} = h\rnd{s}_i$,
  \item
  $\map{h_{-i}}{T}{\prod_{j \in N \setminus \crly{i}} A_j}$ such that
  for
  any $j \in N \setminus \crly{i}$
  and
  any $s \in T$,
  $h_{-i}\rnd{s}_j = h\rnd{s}_j$,
  \item
  $h^S$ ($h_i^S$; $h_{-i}^S$, resp.)\ be the restriction of $h$ ($h_i$; $h_{-i}$, resp.)\ to $S$
  and
  \item
  $h^t$ ($h_i^t$; $h_{-i}^t$, resp.)\ be $h^{T_{< t}}$ ($h_i^{T_{< t}}$; $h_{-i}^{T_{< t}}$, resp.).
 \end{itemize}
\end{itemize}

Introduce notations regarding strategies as follows.
\begin{itemize}
 \item
 Let
 $
  \Sigma_i
  :=
  \set
  {
   \sigma_i \in A_i^{T \times H}
  }
  {
   \rnd{\forall t \in T}
   \rnd{\forall h,g \in H}
   \rnd
   {
    \rnd{h^t = g^t}
    \rightarrow
    \rnd{\sigma_i\rnd{t,h} = \sigma_i\rnd{t,g}}
   }
  }
 $.
 $\Sigma_i$ represents the set of player $i$'s strategies.
 \item
 Let $\Sigma := \prod_{i \in N} \Sigma_i$.
 $\Sigma$ represents the set of strategy tuples.
 \item
 For
 any $i \in N$,
 any $\sigma_i \in \Sigma_i$,
 any $t \in T$
 and
 any $h \in H$,
 let $\sigma_i^t\rnd{h} := \sigma_i\rnd{t,h}$.
 \item
 For
 any $\sigma \in \Sigma$,
 any $t \in T$
 and
 any $h \in H$,
 let $\sigma^t\rnd{h} := \rnd{\sigma_i^t\rnd{h}}_{i \in N}$.
\end{itemize}

Introduce notations regarding feasibility as follows.
\begin{itemize}
 \item
 For any $i \in N$,
 $\map{\bar A_i}{T \times H}{2^{A_i} \setminus \crly{\emptyset}}$ such that
 for
 any $h,g \in H$
 and
 any $t \in T$,
 if $h^t = g^t$,
 $\bar A_i\rnd{t,h} = \bar A_i\rnd{t,g}$.
 \item
 $\bar A_i\rnd{t,h}$ represents the set of player $i$'s feasible actions at history $h^t$.
 \item
 For
 any $i \in N$,
 any $t \in T$
 and
 any $h \in H$,
 let $\bar A_i^t\rnd{h} := \bar A_i\rnd{t,h}$.
 \item
 Let
 $
  \bar H
  =
  \set
  {
   h \in H
  }
  {
   \rnd{\forall t \in T}
   \rnd{h\rnd{t} \in \prod_{i \in N} \bar A_i^t\rnd{h}}
  }
 $.
 \item
 $\bar H$ represents the set of feasible complete histories, i.e., complete histories $h$ such that
 for any period $t$,
 $h\rnd{t}$ is feasible action tuple at history $h^t$.
 \item
 For any $i \in N$,
 let
 $
  \bar\Sigma_i
  :=
  \set
  {
   \sigma_i \in \Sigma_i
  }
  {
   \rnd{\forall \rnd{t,h} \in T \times H} 
   \rnd{\sigma_i^t\rnd{h} \in \bar A_i^t\rnd{h}}
  }
 $.
 \item
 $\bar \Sigma_i$ represents the set of player $i$'s feasible strategies, i.e., player $i$'s strategies $\sigma_i$ such that
 at any history $h^t$,
 $\sigma_i^t\rnd{h}$ is a player $i$'s feasible action.
 \item
 Let
 $\bar\Sigma := \prod_{i \in N} \bar\Sigma_i$.
 $\bar \Sigma$ represents the set of feasible strategy tuples.
\end{itemize}
We do not explicitly consider feasibility.
However,
the result without imposing feasibility also holds with imposing feasibility.
That is,
as a main result of this paper,
it is shown that
if each player's strategies satisfy a set of axioms,
any strategy tuple induces a unique complete history;
by the same reasoning,
it is shown that
if each player's feasible strategies satisfy these axioms,
any feasible strategy tuple induces a unique feasible complete history.

\section{Results}\label{sec:results}

When
for any period $s \geq t$,
$h_i\rnd{s}$ coincides with the action specified by player $i$'s strategy $\sigma_i$ at history $h^s$,
we say that complete history $h$ is $t$-consistent with $\sigma_i$ for $i$.
When
for any $i \in N$,
$h$ is $t$-consistent with $\sigma_i$ for $i$,
we say that complete history $h$ is $t$-consistent with $\sigma$.
Definition \ref{def:consistency} is owing to \cite{Kamada2021}.
\begin{definition}\label{def:consistency}
 Let
 $t \in T$,
 $i \in N$,
 $h \in H$
 and
 $\sigma_i \in \Sigma_i$.
 $h$ is \emph{$t$-consistent with $\sigma_i$ for $i$}
 if and only if
 for any $s \in T_{\geq t}$
 $h_i\rnd{s} = \sigma_i^s\rnd{h}$.
 For
 any $t \in T$,
 any $h \in H$
 and
 any $\sigma \in \Sigma$,
 $h$ is \emph{$t$-consistent with $\sigma$}
 if and only if
 for any $i \in N$,
 $h$ is $t$-consistent with $\sigma_i$ for $i$.
\end{definition}

For
any $t \in T$,
any $i \in N$,
any $\sigma_i \in \Sigma_i$
and
any $\sigma \in \Sigma$,
let
$\ch_i^t\rnd{\sigma_i}$ be the set of $h \in H$ that is $t$-consistent with $\sigma_i$ for $i$
and
$\ch^t\rnd{\sigma}$ be the set of $h \in H$ that is $t$-consistent with $\sigma$.
For
any $t \in T$,
any $i \in N$
and
any $h \in H$,
let
$
 \sh^t\rnd{h}
 :=
 \set
 {
  g \in H
 }
 {
  g^t = h^t
 }
$
and
$
 \sh_i^t\rnd{h}
 :=
 \set
 {
  g \in \sh^t\rnd{h}
 }
 {
  g_{-i} = h_{-i}
 }
$.

Axiom \ref{ax:traceability} states that
there exists $g \in \sh_i^t\rnd{h}$ such that
$g$ is $t$-consistent with strategy $\sigma_i$ for player $i$.
Axiom \ref{ax:traceability} is owing to \cite{Kamada2021}.
\begin{axiom}\label{ax:traceability}
 Let
 $i \in N$
 and
 $\sigma_i \in \Sigma_i$.
 For
 any $t \in T$
 and
 any $h \in H$,
 $\sh_i^t\rnd{h} \cap \ch_i^t\rnd{\sigma_i} \neq \emptyset$.
\end{axiom}

Let $\leq$ be the partial order on $2^T \setminus \crly{\emptyset}$ such that
for any $S,R \in 2^T \setminus \crly{\emptyset}$,
$S \leq R$
if and only if
$
 \rnd{\forall s \in S}
 \rnd{\forall r \in R}
 \rnd{s < r}
$,
or $S = R$.
Let $<$ be the irreflexive part of $\leq$.
For any $\mathcal S \subset 2^T \setminus \crly{\emptyset}$,
$\rnd{\mathcal S,\leq}$ is a partially ordered set.
For any $t \in T$,
let $\mathcal C^t$ be the set of connected sets in $T^t$ equipped with the order topology on $\rnd{T^t,\leq}$.
For
any $i \in N$,
$t \in T$
and
$h \in H$,
let
\begin{align*}
 \pi_i^t\rnd{h}
 :=
 \set
 {
  S \in \mathcal C^t \setminus \crly{\emptyset}
 }
 {
  \rnd{\forall R \in \mathcal C^t}
  \rnd
  {
   \rnd{R \supset S}
   \rightarrow
   \rnd
   {
    \rnd{R = S}
    \leftrightarrow
    \rnd{\card{h_i\rnd{R}} = 1}
   }
  }
 }.
\end{align*}

Lemma \ref{lem:totally_ordered_partition} states that
$\pi_i^t\rnd{h}$ is a partition of $T^t$ that is totally ordered by $\leq$.
\begin{lemma}\label{lem:totally_ordered_partition}
 Let
 $t \in T$
 and
 $h \in H$.
 Then,
 $\pi_i^t\rnd{h}$ is a partition of $T^t$,
 and
 $\rnd{\pi_i^t\rnd{h},\leq}$ is a totally ordered set.
\end{lemma}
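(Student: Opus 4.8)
The plan is to reduce the lemma to a concrete description of the members of $\pi_i^t\rnd{h}$, and then to verify the two assertions separately. First I would establish, as a preliminary claim, that for $S \subset T^t$ one has $S \in \pi_i^t\rnd{h}$ if and only if $S$ is a nonempty connected subset of $T^t$ on which $h_i$ is constant (i.e.\ $\card{h_i\rnd{S}} = 1$) and such that no connected $R \supsetneq S$ has $h_i$ constant on it; equivalently, $S$ is an inclusion-maximal connected set on which $h_i$ is constant. This is a routine unwinding of the defining condition of $\pi_i^t\rnd{h}$: taking $R = S$ there forces $\card{h_i\rnd{S}} = 1$, and for $R \supsetneq S$ the biconditional reduces to $\card{h_i\rnd{R}} \ne 1$; conversely, these two properties make $\rnd{R = S} \leftrightarrow \rnd{\card{h_i\rnd{R}} = 1}$ hold for every connected $R \supset S$, so that $S \in \pi_i^t\rnd{h}$.

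For the partition assertion the key tool is the standard fact that a union of connected subspaces with a common point is connected (together with the fact that singletons are connected). For each $s \in T^t$ I would let $S\rnd{s}$ be the union of all $C \in \mathcal C^t$ with $s \in C$ and $\card{h_i\rnd{C}} = 1$. Then $\crly{s}$ is one such set, so $s \in S\rnd{s}$; every $C$ in the union contains $s$, so $S\rnd{s}$ is connected, i.e.\ $S\rnd{s} \in \mathcal C^t$; $h_i$ takes only the value $h_i\rnd{s}$ on $S\rnd{s}$, so $\card{h_i\rnd{S\rnd{s}}} = 1$; and $S\rnd{s}$ is maximal, since a connected $R \supsetneq S\rnd{s}$ with $h_i$ constant on it would itself lie in the union defining $S\rnd{s}$ and hence satisfy $R \subset S\rnd{s}$, a contradiction. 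Thus $S\rnd{s} \in \pi_i^t\rnd{h}$ by the preliminary claim. Conversely, any member of $\pi_i^t\rnd{h}$ that contains $s$ lies in the union defining $S\rnd{s}$, hence is contained in $S\rnd{s}$, hence equals $S\rnd{s}$ by its own maximality. So each $s \in T^t$ belongs to exactly one member of $\pi_i^t\rnd{h}$; since moreover $\emptyset \notin \pi_i^t\rnd{h}$, this says precisely that $\pi_i^t\rnd{h}$ is a partition of $T^t$.

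For the total-order assertion, $\leq$ is already a partial order on $2^T \setminus \crly{\emptyset}$ and $\pi_i^t\rnd{h} \subset 2^T \setminus \crly{\emptyset}$, so it remains only to show that any two members of $\pi_i^t\rnd{h}$ are comparable. I would first record that every $C \in \mathcal C^t$ is order-convex in $T^t$: if $a, b \in C$ with $a < c < b$ (hence $c \in T^t$) and $c \notin C$, then $\set{x \in T^t}{x < c}$ and $\set{x \in T^t}{x > c}$ are disjoint open subsets of $T^t$ that cover $C$ and each meet $C$, contradicting connectedness. Now let $S, R \in \pi_i^t\rnd{h}$ be distinct; they are disjoint by the partition property. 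If neither $S \leq R$ nor $R \leq S$, then, using disjointness, there are $s_1, s_2 \in S$ and $r_1, r_2 \in R$ with $r_1 < s_1$ and $s_2 < r_2$; comparing $s_2$ with $r_1$ in the three possible cases $s_2 < r_1$, $s_2 = r_1$, $s_2 > r_1$, and using order-convexity of $S$ in the first case and of $R$ in the last, one always finds an element of $S$ in $R$ or an element of $R$ in $S$, contradicting disjointness. Hence any two members are comparable, and $\rnd{\pi_i^t\rnd{h},\leq}$ is totally ordered.

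The step I expect to demand the most care — though not any real depth — is the precise logical unwinding of the compact defining condition of $\pi_i^t\rnd{h}$: one must confirm that it genuinely forces $h_i$ to be constant on $S$, and not merely non-constant on every proper connected extension of $S$, and one must apply the common-point union theorem to a family of connected sets that really do all contain $s$. It is worth noting that the completeness hypotheses on $\rnd{T,\leq}$ are not used in this lemma; connectedness enters only through order-convexity and the common-point union theorem, both of which hold in any linearly ordered topological space.
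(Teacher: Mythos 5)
Your proposal is correct and follows essentially the same route as the paper's proof: both construct the member through each point $s$ as the union of all connected sets containing $s$ on which $h_i$ is constant, invoke the common-point union theorem, and derive totality from the fact that connected sets in the order topology are order-convex together with an interleaving argument. The only differences are presentational — you explicitly prove order-convexity and skip re-verifying antisymmetry and transitivity (which the paper redundantly re-checks even though $\leq$ is declared a partial order on $2^T \setminus \crly{\emptyset}$).
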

\begin{proof}
 See Section \ref{sec:proof_of_totally_ordered_partition}.
\end{proof}

Axiom \ref{ax:well-orderedness} states that
in
any subgame starting from period $t$
and
any complete history $t$-consistent with $\sigma_i$ for $i$,
when times are partitioned into connected sets (intervals) in which player $i$'s actions are constant,
any set consisting of some of these connected sets has a minimum set.
Axiom \ref{ax:well-orderedness} is introduced by this paper.
\begin{axiom}\label{ax:well-orderedness}
 Let
 $i \in N$
 and
 $\sigma_i \in \Sigma_i$.
 For
 any $t \in T$
 and
 any $h \in \ch_i^t\rnd{\sigma_i}$,
 $\rnd{\pi_i^t\rnd{h},\leq}$ is a well-ordered set.
\end{axiom}

Axiom \ref{ax:initial_uniqueness} states that
in any subgame starting from period $t$,
any two complete histories $t$-consistent with strategy $\sigma_i$ for player $i$ coincide during a sufficiently small initial interval.
Axiom \ref{ax:initial_uniqueness} is introduced by this paper.
\begin{axiom}\label{ax:initial_uniqueness}
 Let
 $i \in N$
 and
 $\sigma_i \in \Sigma_i$.
 For
 any $t \in T$ such that $T_{> t} \neq \emptyset$,
 and
 any $h,g \in \ch_i^t\rnd{\sigma_i}$ such that
 $h^t = g^t$,
 there exists $s \in T_{> t}$ such that
 $h_i^{\sqrrnd{t,s}} = g_i^{\sqrrnd{t,s}}$.
\end{axiom}

For any $i \in N$,
let $\tilde\Sigma_i$ be the set of $\sigma_i \in \Sigma_i$ that satisfies Axioms \ref{ax:traceability}--\ref{ax:initial_uniqueness}.
Let $\tilde\Sigma := \prod_{i \in N} \tilde\Sigma_i$.

Theorem \ref{thm:unique_existence_of_consistent_history} states that
any strategy tuple such that each player's strategy satisfies Axioms \ref{ax:traceability}--\ref{ax:initial_uniqueness} has a unique consistent history in any subgame.
\begin{theorem}\label{thm:unique_existence_of_consistent_history}
 Let
 $t \in T$,
 $h \in H$
 and
 $\sigma \in \tilde\Sigma$.
 Then,
 $\sh^t\rnd{h} \cap \ch^t\rnd{\sigma}$ is a singleton.
\end{theorem}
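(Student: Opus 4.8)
The plan is to establish existence and uniqueness separately; both arguments lean on the hypothesis that every nonempty subset of $T$ has an infimum, and on the finiteness of $N$.

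\textbf{Uniqueness.} Suppose $g,g' \in \sh^t\rnd{h} \cap \ch^t\rnd{\sigma}$. They agree on $T_{<t}$, so it suffices to show that $D := \set{s \in T^t}{g\rnd{s} \neq g'\rnd{s}}$ is empty. If not, put $s^* := \inf D$; then $s^* \geq t$ and $g,g'$ agree on $T_{<s^*}$, i.e.\ $g^{s^*} = g'^{s^*}$. If $s^* \in D$, choose $i$ with $g_i\rnd{s^*} \neq g'_i\rnd{s^*}$; since a strategy at time $s^*$ depends only on the restriction of the history to $T_{<s^*}$, we get $g_i\rnd{s^*} = \sigma_i^{s^*}\rnd{g} = \sigma_i^{s^*}\rnd{g'} = g'_i\rnd{s^*}$, a contradiction. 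If $s^* \notin D$, then $g$ and $g'$ agree on $T_{\leq s^*}$ and both lie in $\ch_i^{s^*}\rnd{\sigma_i}$ for every $i$ (consistency from $t$ implies consistency from $s^*\ge t$); if $T_{>s^*}=\emptyset$ then $D=\emptyset$, a contradiction, and otherwise Axiom \ref{ax:initial_uniqueness} supplies, for each $i$, some $s_i \in T_{>s^*}$ with $g_i^{\sqrrnd{s^*,s_i}} = g_i'^{\sqrrnd{s^*,s_i}}$. Taking $\tilde s := \min_{i \in N} s_i > s^*$, the histories $g$ and $g'$ agree on $T_{<\tilde s}$, so $\tilde s$ is a lower bound of $D$, contradicting $\inf D = s^* < \tilde s$.

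\textbf{Existence.} Call $\phi$ a \emph{partial consistent history} if its domain $D_\phi$ is a down-set of $T$ containing $T_{<t}$, $\phi$ agrees with $h$ on $T_{<t}$, and for every $s \in D_\phi \cap T^t$ and every $i$ we have $\phi_i\rnd{s} = \sigma_i^s\rnd{\phi}$ (well defined, since $T_{<s}\subset D_\phi$ and the value depends only on the past). Order these by extension; the set is nonempty ($h^{T_{<t}}$ works) and closed under unions of chains, so Zorn's Lemma gives a maximal $\phi^*$ with domain $D^*$, and since down-sets of a totally ordered set are initial segments it suffices to show $D^* = T$: then $\phi^* \in \sh^t\rnd{h} \cap \ch^t\rnd{\sigma}$. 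If $D^* = T_{<\bar s}$, or $D^* = T_{\leq \bar s}$ with $\bar s$ having an immediate successor, one extends $\phi^*$ by a single point using $\phi^*\rnd{\bar s}:=\rnd{\sigma_i^{\bar s}\rnd{\phi^*}}_{i\in N}$ (one checks the new point lies in $T^t$), contradicting maximality. So the only remaining case is $D^* = T_{\leq \bar s}$ with $T_{>\bar s}\neq\emptyset$ but $\bar s = \inf T_{>\bar s}$, i.e.\ $\bar s$ has no successor; here also $\bar s\ge t$.

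\textbf{The main obstacle} is exactly this limit-from-above case, and it is where Axioms \ref{ax:traceability}--\ref{ax:initial_uniqueness} work together. Fix any $\hat h \in H$ extending $\phi^*$. By Axiom \ref{ax:traceability} at time $\bar s$, for each $i$ pick $g^{\rnd i} \in \sh_i^{\bar s}\rnd{\hat h} \cap \ch_i^{\bar s}\rnd{\sigma_i}$; thus $g^{\rnd i}$ agrees with $\phi^*$ on $T_{<\bar s}$, agrees with $\hat h$ on all players $\neq i$, player $i$ follows $\sigma_i$ from $\bar s$ on, and in particular $g^{\rnd i}_i\rnd{\bar s} = \sigma_i^{\bar s}\rnd{\phi^*}$, the value $\phi^*$ already assigns. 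Form the \emph{synchronized} history $g^{**}$ that equals $\phi^*$ on $T_{\leq\bar s}$ and has $g^{**}_i := g^{\rnd i}_i$ on $T_{>\bar s}$ for each $i$, and apply Axiom \ref{ax:traceability} once more at $\bar s$, now with $\hat h := g^{**}$, to obtain $g^{\rnd i *} \in \sh_i^{\bar s}\rnd{g^{**}} \cap \ch_i^{\bar s}\rnd{\sigma_i}$. Since $g^{\rnd i}$ and $g^{\rnd i *}$ both lie in $\ch_i^{\bar s}\rnd{\sigma_i}$ and agree on $T_{<\bar s}$, Axiom \ref{ax:initial_uniqueness} (using $T_{>\bar s}\neq\emptyset$) gives $s_i \in T_{>\bar s}$ with $g^{\rnd i}_i = g^{\rnd i *}_i$ on $\sqrrnd{\bar s,s_i}$; Lemma \ref{lem:totally_ordered_partition} and Axiom \ref{ax:well-orderedness} can be invoked to read $s_i$ off as an endpoint of the earliest block of $\pi_i^{\bar s}\rnd{g^{\rnd i}}$. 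On $\sqrrnd{\bar s,s_i}$ the history $g^{\rnd i *}$ coincides with $g^{**}$ (its $i$-th coordinate equals $g^{\rnd i}_i = g^{**}_i$, its other coordinates equal those of $g^{**}$ by construction), and it agrees with $g^{**}$ below $\bar s$ as well; since strategies see only the past, $g^{**}_i\rnd{s} = g^{\rnd i *}_i\rnd{s} = \sigma_i^s\rnd{g^{\rnd i *}} = \sigma_i^s\rnd{g^{**}}$ for all $s \in \sqrrnd{\bar s,s_i}$. Putting $s^* := \min_{i \in N} s_i > \bar s$, the restriction of $g^{**}$ to $T_{<s^*}$ is a partial consistent history whose domain properly contains $T_{\leq\bar s}$ (as $T_{>\bar s}$ has no minimum), contradicting the maximality of $\phi^*$. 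Hence $D^* = T$, and together with the uniqueness part $\sh^t\rnd{h}\cap\ch^t\rnd{\sigma}$ is a singleton. I expect the delicate steps to be the bootstrap from an arbitrary completion $\hat h$ to the synchronized $g^{**}$, and the use of the finiteness of $N$ to pass from the per-player intervals $\sqrrnd{\bar s,s_i}$ to a single interval on which all coordinates are simultaneously consistent.
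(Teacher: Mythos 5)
Your argument is correct as far as I can check, but it is a genuinely different proof from the paper's. For uniqueness, you take the infimum $s^*$ of the disagreement set $D$ and rule out both $s^* \in D$ (strategies depend only on the past, so consistency forces agreement at $s^*$) and $s^* \notin D$ (Axiom \ref{ax:initial_uniqueness} at $s^*$, together with the finiteness of $N$, pushes agreement strictly past $s^*$, contradicting that $s^*$ is the \emph{greatest} lower bound of $D$). The paper instead proves the stronger Lemma \ref{lem:part_uniqueness} --- uniqueness on every $S \in \mathcal T^t$ for histories consistent only on $S$ --- by transfinite induction along the common refinement of the partitions $\pi_i^t\rnd{e}$, and that is precisely where Axiom \ref{ax:well-orderedness} and the partition machinery enter; the paper needs this stronger form because its existence proof glues the witnesses $g_s$ into a diagonal history. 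Your existence proof avoids the gluing altogether: Zorn's lemma on partial consistent histories, with the limit case at $\bar s$ handled by a two-stage use of Axiom \ref{ax:traceability} (first to build the synchronized $g^{**}$, then, taking $g^{**}$ as the reference history, to certify each coordinate of $g^{**}$ near $\bar s$ via Axiom \ref{ax:initial_uniqueness}). What your route buys is substantial: Axiom \ref{ax:well-orderedness} is never actually used --- your one appeal to it (reading $s_i$ off the earliest block of $\pi_i^{\bar s}\rnd{g^{\rnd i}}$) is dispensable, since Axiom \ref{ax:initial_uniqueness} already hands you $s_i$. If this withstands scrutiny, it shows the theorem follows from Axioms \ref{ax:traceability} and \ref{ax:initial_uniqueness} alone, which is stronger than what the paper claims and should prompt a very careful recheck; I did not find an error.

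Two details to tighten. First, ``down-sets of a totally ordered set are initial segments'' is not by itself enough to conclude that the maximal domain $D^*$ has the form $T$, $T_{<\bar s}$ or $T_{\leq\bar s}$; you need the completeness hypotheses on $T$ (set $\bar s := \inf\rnd{T \setminus D^*}$ and check the two cases $\bar s \notin D^*$, $\bar s \in D^*$). Second, justify $\bar s \geq t$ in the limit case explicitly: if $\bar s < t$ and $D^* = T_{\leq \bar s} \supset T_{<t}$, then $t$ is the minimum of $T_{>\bar s}$, so that situation falls under the successor case, not the limit case.
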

\begin{proof}
 See Appendix \ref{sec:proof_of_unique_existence_of_consistent_history}.
\end{proof}

\section{Relationship with literature}\label{sec:literature}

Proposition \ref{prop:well-ordered_time} states that
if times are well-ordered,
any strategy of any player satisfies Axioms \ref{ax:traceability}--\ref{ax:initial_uniqueness}.
\begin{proposition}\label{prop:well-ordered_time}
 Suppose that
 $\rnd{T,\leq}$ is a well-ordered set.
 Let
 $i \in N$
 and
 $\sigma_i \in \Sigma_i$.
 Then,
 $\sigma_i$ satisfies Axioms \ref{ax:traceability}--\ref{ax:initial_uniqueness}.
\end{proposition}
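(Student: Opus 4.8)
The plan is to verify each of the three axioms directly, exploiting the fact that in a well-ordered $(T,\leq)$ every nonempty subset has a minimum, so that "subgames" and "initial intervals" become particularly rigid.

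First I would handle Axiom \ref{ax:traceability}. Fix $t \in T$ and $h \in H$. The point is to construct, by transfinite recursion on $T^t$, a complete history $g \in \sh_i^t\rnd{h}$ that is $t$-consistent with $\sigma_i$ for $i$. Set $g^t := h^t$ and $g_{-i} := h_{-i}$ (so $g \in \sh_i^t\rnd{h}$ is forced on $T_{<t}$ and on the other coordinates everywhere). For $s \in T^t$, assuming $g_i$ has been defined on $T_{<s}$ — equivalently $g^s$ is determined, since $g$ agrees with $h$ off the $i$-coordinate — put $g_i\rnd{s} := \sigma_i^s\rnd{g}$; this is well-defined because $\sigma_i^s$ depends only on $g^s$. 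By construction $g_i\rnd{s} = \sigma_i^s\rnd{g}$ for every $s \in T^t$, i.e. $g \in \ch_i^t\rnd{\sigma_i}$, and $g \in \sh_i^t\rnd{h}$, so the intersection is nonempty. The only thing to check carefully is that the recursion is legitimate, which is exactly where well-orderedness of $T$ (hence of $T^t$) is used.

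Next, Axiom \ref{ax:well-orderedness}: I must show that for any $t$ and any $h \in \ch_i^t\rnd{\sigma_i}$, the poset $\rnd{\pi_i^t\rnd{h},\leq}$ is well-ordered. By Lemma \ref{lem:totally_ordered_partition} it is already a totally ordered partition of $T^t$, so it suffices to show every nonempty $\mathcal S \subset \pi_i^t\rnd{h}$ has a minimum. Let $U := \bigcup \mathcal S \subset T^t$; since $T$ is well-ordered, $U$ has a minimum element $s^\ast$, and $s^\ast$ lies in some block $S^\ast \in \mathcal S$. I claim $S^\ast = \min \mathcal S$: for any $S \in \mathcal S$ and any $s \in S$ we have $s \in U$ so $s \geq s^\ast$, and since distinct blocks of a partition are $\leq$-comparable with all their points strictly separated, either $S = S^\ast$ or $S^\ast < S$. (The case $s = s^\ast$ can only occur when $S = S^\ast$.) Hence $S^\ast \leq S$ for all $S \in \mathcal S$, as required. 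Note this argument never used $h \in \ch_i^t\rnd{\sigma_i}$ — in well-ordered time the conclusion holds for all $h$.

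Finally, Axiom \ref{ax:initial_uniqueness}: fix $t$ with $T_{>t} \neq \emptyset$ and $h,g \in \ch_i^t\rnd{\sigma_i}$ with $h^t = g^t$. Since $T$ is well-ordered, $T_{>t}$ has a minimum $s_0$, and $\sqrrnd{t,s_0} = \crly{t}$ is a singleton. Thus I only need $h_i\rnd{t} = g_i\rnd{t}$. But $h,g \in \ch_i^t\rnd{\sigma_i}$ gives $h_i\rnd{t} = \sigma_i^t\rnd{h}$ and $g_i\rnd{t} = \sigma_i^t\rnd{g}$, and $h^t = g^t$ together with the measurability-style condition defining $\Sigma_i$ yields $\sigma_i^t\rnd{h} = \sigma_i^t\rnd{g}$; hence $h_i^{\sqrrnd{t,s_0}} = g_i^{\sqrrnd{t,s_0}}$, and taking $s := s_0$ finishes it. I expect the only genuinely delicate point in the whole argument to be the transfinite recursion in the traceability step — making precise that "define $g_i\rnd{s}$ from $g^s$ for all $s \in T^t$" produces a well-defined $g \in H$ — while the other two axioms reduce, via well-orderedness, to the near-trivial observations above.
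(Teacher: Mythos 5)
Your proposal is correct and follows essentially the same route as the paper's own proof: transfinite recursion on the well-ordered $T^t$ for Axiom \ref{ax:traceability}, taking the block containing $\min \bigcup \mathcal S$ for Axiom \ref{ax:well-orderedness}, and reducing Axiom \ref{ax:initial_uniqueness} to the singleton $\sqrrnd{t,\min T_{>t}} = \crly{t}$ via the history-dependence property defining $\Sigma_i$. No gaps.
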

\begin{proof}
 See Appendix \ref{sec:proof_of_well-ordered_time}.
\end{proof}

Axiom \ref{ax:inertiality} states that
in any subgame starting from period $t$,
there exists a small initial interval during which player $i$ does not change his/her action in any complete history $t$-consistent with strategy $\sigma_i$ for $i$.
Axiom \ref{ax:inertiality} is owing to \cite{Bergin1993}.
\begin{axiom}\label{ax:inertiality}
 Let
 $i \in N$
 and
 $\sigma_i \in \Sigma_i$.
 For
 any $t \in T$ such that $T_{> t} \neq \emptyset$
 and
 any $h \in H$,
 there exist
 $s \in T_{> t}$
 and
 $a_i \in A_i$
 such that
 for
 any $r \in \sqrrnd{t,s}$
 and
 any $g \in \sh^t\rnd{h}$,
 $\sigma_i^r\rnd{g} = a_i$.
\end{axiom}

Proposition {prop:inertiality} states that
Axiom \ref{ax:inertiality} implies Axioms \ref{ax:traceability}--\ref{ax:initial_uniqueness}.
\begin{proposition}\label{prop:inertiality}
 Let
 $i \in N$
 and
 $\sigma_i \in \Sigma_i$.
 Then,
 if $\sigma_i$ satisfies Axiom \ref{ax:inertiality},
 it satisfies Axioms \ref{ax:traceability}--\ref{ax:initial_uniqueness}.
\end{proposition}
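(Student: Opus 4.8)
The plan is to verify Axioms \ref{ax:traceability}, \ref{ax:well-orderedness} and \ref{ax:initial_uniqueness} in turn, in each case feeding into the argument the nondegenerate constant initial piece that Axiom \ref{ax:inertiality} supplies at every point. Axiom \ref{ax:initial_uniqueness} is immediate: given $t$ with $T_{>t}\neq\emptyset$ and $h,g\in\ch_i^t\rnd{\sigma_i}$ with $h^t=g^t$, we have $\sh^t\rnd{h}=\sh^t\rnd{g}$, so Axiom \ref{ax:inertiality} at $t$ applied to $h$ yields $s\in T_{>t}$ and $a_i\in A_i$ with $\sigma_i^r\rnd{k}=a_i$ for all $r\in\sqrrnd{t,s}$ and all $k\in\sh^t\rnd{h}$; since $h,g\in\sh^t\rnd{h}\cap\ch_i^t\rnd{\sigma_i}$, for $r\in\sqrrnd{t,s}$ we get $h_i\rnd{r}=\sigma_i^r\rnd{h}=a_i=\sigma_i^r\rnd{g}=g_i\rnd{r}$, i.e.\ $h_i^{\sqrrnd{t,s}}=g_i^{\sqrrnd{t,s}}$.

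For Axiom \ref{ax:traceability}, fix $t$ and $h$, set $g_{-i}:=h_{-i}$ and $g\rnd{r}:=h\rnd{r}$ for $r<t$, and build $g_i$ on $T^t$ by transfinite recursion so that $g_i\rnd{r}=\sigma_i^r\rnd{g}$ for all $r\in T^t$. Let $\theta_0:=t$. At a successor step, assume $g_i$ is defined and consistent on $\sqrrnd{t,\theta_\alpha}$; if $T_{>\theta_\alpha}=\emptyset$, set $g_i\rnd{\theta_\alpha}:=\sigma_i^{\theta_\alpha}\rnd{g}$ (forced by $g^{\theta_\alpha}$) and stop; otherwise apply Axiom \ref{ax:inertiality} at $\theta_\alpha$ to get $\theta_{\alpha+1}:=s\in T_{>\theta_\alpha}$ and $a\in A_i$ with $\sigma_i^r\rnd{k}=a$ for all $r\in\sqrrnd{\theta_\alpha,\theta_{\alpha+1}}$ and all $k$ with $k^{\theta_\alpha}=g^{\theta_\alpha}$, and set $g_i:=a$ on $\sqrrnd{\theta_\alpha,\theta_{\alpha+1}}$; since this leaves $g^{\theta_\alpha}$ unchanged and inertiality fixes $a$ independently of player $i$'s own continuation, consistency propagates to $\sqrrnd{t,\theta_{\alpha+1}}$. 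At a limit step the union of the consistent partial definitions already obtained defines $g_i$ consistently on a down-set of $T^t$; if that down-set is $T^t$ we are done, otherwise let $\theta_\lambda$ be the infimum of the uncovered part and continue. The $\theta_\alpha$ increase strictly at every non-final successor step, so the recursion must terminate --- either by covering $T^t$ at a limit step or by reaching $\max T$ --- and in both cases $g_i$ is defined and consistent on all of $T^t$; the resulting $g$ lies in $\sh_i^t\rnd{h}\cap\ch_i^t\rnd{\sigma_i}$.

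For Axiom \ref{ax:well-orderedness}, fix $t$ and $h\in\ch_i^t\rnd{\sigma_i}$ and suppose some nonempty $\mathcal S\subseteq\pi_i^t\rnd{h}$ has no $\leq$-minimum. Put $A:=\bigcup\mathcal S$ and $\tau:=\inf A$; since $t$ bounds $A$ below, $\tau\in T^t$, and by Lemma \ref{lem:totally_ordered_partition} there is a block $S_\tau\in\pi_i^t\rnd{h}$ with $\tau\in S_\tau$. Using that $\pi_i^t\rnd{h}$ is a totally ordered partition one shows successively: $S_\tau\cap A=\emptyset$ (else $S_\tau\in\mathcal S$ and, by the total order, $S_\tau=\min\mathcal S$); hence $A\subseteq T_{>\tau}$ and $\tau=\max S_\tau$ (otherwise $\sup S_\tau$ would be a lower bound of $A$ above $\tau$); and $\tau$ has no immediate successor (it would be a lower bound of $A$ above $\tau$). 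Then $T_{>\tau}\neq\emptyset$, so Axiom \ref{ax:inertiality} at $\tau$ together with $h\in\ch_i^t\rnd{\sigma_i}$ gives $s\in T_{>\tau}$ with $h_i$ constant on $\sqrrnd{\tau,s}$. Since $h_i$ is also constant on $S_\tau$ with value $h_i\rnd{\tau}$, it is constant on the order-convex set $S_\tau\cup\sqrrnd{\tau,s}$, which strictly contains $S_\tau$ because $\rnd{\tau,s}\neq\emptyset$; as $S_\tau$ is a maximal connected constancy set of $h_i$, this set must be disconnected, so it contains a jump $c<c'$ with $\rnd{c,c'}=\emptyset$. Such a jump cannot lie inside the connected set $S_\tau$, cannot straddle $\tau$, and cannot have $c=\tau$ (which would be an immediate successor of $\tau$), which leaves only the possibility that the jump sits strictly inside $\rnd{\tau,s}$; ruling this out is the crux.

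The order-theoretic bookkeeping (infima, down-sets, the total-order minimum argument) and the transfinite recursion for traceability are routine. I expect the real difficulty to be precisely in eliminating that last case of the well-orderedness argument: a jump strictly inside $\rnd{\tau,s}$ is consistent with everything derived so far, so one must push further --- for instance restrict $\mathcal S$ to a descending subfamily, pass to its infimum $\tau'$, and re-apply Axiom \ref{ax:inertiality} there, exploiting that inertiality forces all blocks accumulating at $\tau'$ to carry a single action --- and then show that this process must terminate. Equivalently, one has to rule out a time set that, at a point with no immediate successor, accumulates through infinitely many jumps across which a consistent history stays constant; that termination argument is where I would concentrate the effort.
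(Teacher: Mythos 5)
Your verifications of Axioms \ref{ax:initial_uniqueness} and \ref{ax:traceability} are essentially sound. The initial-uniqueness argument is the same one-line computation the paper uses. For traceability you run a transfinite recursion that concatenates the constant inertial windows $\sqrrnd{\theta_\alpha,\theta_{\alpha+1}}$; the paper instead considers the set $S$ of times $s$ with $\sh_i^t\rnd{h}\cap\ch_i^{\sqr{t,s}}\rnd{\sigma_i}\neq\emptyset$, diagonalizes over witnesses $g_s$, and pushes $\sup S$ forward using inertiality together with Lemma \ref{lem:increment_of_existence}. The two arguments are equivalent in substance (your $\theta_\alpha$ are the paper's successive suprema), and yours avoids the auxiliary uniqueness claim (\ref{eq:part_uniqueness_in_inertiality}) that the paper must establish first so that its diagonal history is well defined; either route is acceptable.

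The genuine gap is in Axiom \ref{ax:well-orderedness}, and you say so yourself: you reduce to the possibility of a jump $c<c'$ with $\rnd{c,c'}=\emptyset$ lying strictly inside $\rnd{\tau,s}$ and leave that case unresolved, so the proposal does not prove the axiom. The paper closes the argument differently: rather than arguing by contradiction from the block containing $\tau=\inf\bigcup\rho$, it picks a point $q\in\sqrrnd{\tau,s}\cap\bigcup\rho$ (which exists because $\tau$ is the infimum), lets $S$ be the block of $\pi_i^t\rnd{h}$ containing $q$, and argues that $S\cup\sqrrnd{\tau,s}$ is a connected set on which $h_i$ is constant, so that maximality of $S$ forces $S\supseteq\sqrrnd{\tau,s}\ni\tau$; then $S$ itself is the minimum of $\rho$. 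The step doing the work there is the assertion that $\sqrrnd{\tau,s}\in\mathcal C^t$, i.e., that the inertial window is connected in the order topology --- which is precisely the possibility you flagged: if $T$ has a jump inside $\rnd{\tau,s}$, that window is order-convex but not connected, and maximality of $S$ among \emph{connected} constancy sets then yields nothing. So your instinct about where the difficulty sits is accurate, but to complete the proof you must either justify connectedness of the window (for instance by reading $\mathcal C^t$ as the family of order-convex sets rather than topologically connected sets, or under a density hypothesis on $T$) or actually carry out the termination argument you only sketch in your final paragraph; as written, the well-orderedness part is not a proof.
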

\begin{proof}
 See Appendix \ref{sec:proof_of_inertiality}.
\end{proof}

Axiom states that
in any subgame starting from period $t$,
player $i$ moves only finite times during any finite-length interval in any complete history $t$-consistent with strategy $\sigma_i$ for $i$.
Axiom \ref{ax:frictionality} is owing to \cite{Kamada2021}.
\begin{axiom}\label{ax:frictionality}
 Let
 $i \in N$
 and
 $\sigma_i \in \Sigma_i$.
 There exists $z_i \in A_i$ such that
 for
 any $t \in T$,
 any $h \in \ch_i^t\rnd{\sigma_i}$
 and
 any $s \in T_{> t}$,
 $
  \card
  {
   \set
   {q \in \sqr{t,s}}
   {h_i\rnd{q} \neq z_i}
  }
  <
  \infty
 $.
\end{axiom}

Proposition \ref{prop:frictionality} states that
Axiom \ref{ax:frictionality} implies Axioms \ref{ax:well-orderedness} and \ref{ax:initial_uniqueness}.
\begin{proposition}\label{prop:frictionality}
 Let
 $i \in N$
 and
 $\sigma_i \in \Sigma_i$.
 Then,
 if $\sigma_i$ satisfies Axiom \ref{ax:frictionality},
 it satisfies Axioms \ref{ax:well-orderedness} and \ref{ax:initial_uniqueness}.
\end{proposition}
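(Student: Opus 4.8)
The plan is to verify Axioms \ref{ax:well-orderedness} and \ref{ax:initial_uniqueness} separately. Fix $i \in N$ and $\sigma_i \in \Sigma_i$ satisfying Axiom \ref{ax:frictionality}, and let $z_i \in A_i$ be the action it supplies. Axiom \ref{ax:initial_uniqueness} is the easier half. Fix $t \in T$ with $T_{> t} \neq \emptyset$ and $h,g \in \ch_i^t\rnd{\sigma_i}$ with $h^t = g^t$. Since $\sigma_i^t$ depends only on the history before $t$, consistency gives $h_i\rnd{t} = \sigma_i^t\rnd{h} = \sigma_i^t\rnd{g} = g_i\rnd{t}$, so $t \notin D := \set{q \in T^t}{h_i\rnd{q} \neq g_i\rnd{q}}$. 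If $D = \emptyset$, any $s \in T_{> t}$ works. Otherwise pick $d \in D$ (so $d > t$, since $t \notin D$); applying Axiom \ref{ax:frictionality} to each of $h,g \in \ch_i^t\rnd{\sigma_i}$ shows that $\set{q \in \sqr{t,d}}{h_i\rnd{q} \neq z_i}$ and $\set{q \in \sqr{t,d}}{g_i\rnd{q} \neq z_i}$ are finite, and $D \cap \sqr{t,d}$ is contained in their union, hence finite and nonempty. Put $s := \min\rnd{D \cap \sqr{t,d}}$; then $s > t$ because $t \notin D$, and $D \cap \sqrrnd{t,s} = \emptyset$ by minimality, i.e.\ $h_i^{\sqrrnd{t,s}} = g_i^{\sqrrnd{t,s}}$, which is Axiom \ref{ax:initial_uniqueness}.

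For Axiom \ref{ax:well-orderedness}, fix $t \in T$ and $h \in \ch_i^t\rnd{\sigma_i}$, and let $\mathcal S \subseteq \pi_i^t\rnd{h}$ be nonempty; I must exhibit a minimum of $\mathcal S$. Put $B := \bigcup \mathcal S$ and $\tau := \inf B$, which exists and satisfies $\tau \geq t$. If $\tau \in B$, then by Lemma \ref{lem:totally_ordered_partition} there is a unique $S^\ast \in \pi_i^t\rnd{h}$ with $\tau \in S^\ast$; it lies in $\mathcal S$, and since $\pi_i^t\rnd{h}$ is totally ordered and every element of $B$ is $\geq \tau$, no block of $\mathcal S$ is strictly below $S^\ast$, so $S^\ast = \min \mathcal S$. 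Assume then $\tau \notin B$, fix $S_0 \in \mathcal S$ and $r \in S_0$; then $r \in T^t$, and since $\tau \leq r \in B$ while $\tau \notin B$ we get $t \leq \tau < r$, so $r \in T_{> t}$. By Axiom \ref{ax:frictionality} the set $F := \set{q \in \sqr{t,r}}{h_i\rnd{q} \neq z_i}$ is finite. I claim only finitely many blocks of $\pi_i^t\rnd{h}$ meet $\sqr{t,r}$. Granting this, $\mathcal S' := \set{S \in \mathcal S}{S \cap \sqr{t,r} \neq \emptyset}$ is a nonempty finite totally ordered set, hence has a minimum $S^\ast$; and any block of $\mathcal S \setminus \mathcal S'$ lies strictly above $S_0$ (it misses $\sqr{t,r}$ whereas $S_0$ contains $r$, so it can be neither equal to nor below $S_0$), hence above $S^\ast$, so again $S^\ast = \min \mathcal S$. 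Thus $\rnd{\pi_i^t\rnd{h},\leq}$ is well-ordered.

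The remaining claim is where I expect the main difficulty. By the definition of $\pi_i^t\rnd{h}$ every block $S$ satisfies $\card{h_i\rnd{S}} = 1$; write $a_S$ for its value. A block with $a_S \neq z_i$ that meets $\sqr{t,r}$ satisfies $\emptyset \neq S \cap \sqr{t,r} \subseteq F$, so, distinct blocks being disjoint, at most $\card{F}$ such blocks meet $\sqr{t,r}$; frictionality thus disposes of the non-$z_i$ blocks at once. The delicate part is the blocks with $a_S = z_i$: distinct such blocks meeting $\sqr{t,r}$ contribute disjoint nonempty subsets of $\sqr{t,r}\setminus F$, each convex, and one must rule out infinitely many of them accumulating inside $\sqr{t,r}$. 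Since $\sqr{t,r}$ is bounded, the hypotheses on $\rnd{T,\leq}$ make it order-complete, and the argument reduces to showing that deleting the finitely many points of $F$ leaves only finitely many maximal convex subsets of $\sqr{t,r}$, each of which is connected (so that it lies in a single block of $\pi_i^t\rnd{h}$). Pinning down this last order-theoretic reduction — precisely how order-completeness of $\rnd{T,\leq}$ on bounded sets constrains the convex pieces and connected components of $\sqr{t,r}\setminus F$ — is the crux of the proof; the remaining steps are the bookkeeping carried out above.
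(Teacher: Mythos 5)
Your treatment of Axiom \ref{ax:initial_uniqueness} is correct and is essentially the paper's argument: the paper takes the first time after $t$ at which either history deviates from $z_i$, you take the first time at which the two histories differ, and both reductions to a nonempty finite set rest on frictionality in the same way. Likewise your case $\tau \in B$ for Axiom \ref{ax:well-orderedness} coincides with the paper's first case. The genuine gap is exactly where you flag it, in the case $\tau \notin B$: the claim that only finitely many blocks of $\pi_i^t\rnd{h}$ meet $\sqr{t,r}$ is not merely unproved, it is false under the paper's stated hypotheses on $\rnd{T,\leq}$. Take $T = \crly{0} \cup \set{1/n}{n \in \mathbb N}$ with the usual order (this has a minimum, all infima, and all suprema of bounded sets) and let $h_i$ be constantly $z_i$; then $F = \emptyset$, yet every singleton is a maximal connected set on which $h_i$ is constant, so infinitely many blocks meet $\sqr{0,1}$. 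Your hoped-for reduction --- that $\sqr{t,r} \setminus F$ has finitely many maximal convex pieces, each connected --- fails at the word ``connected'': in a non-densely-ordered $T$ a convex set can be totally disconnected, so a single convex piece can split into infinitely many blocks. (In this example the family $\set{\crly{1/n}}{n \geq 1}$ of blocks has no minimum, so the example threatens the proposition itself unless $T$ is further restricted.)

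The paper's proof avoids any counting of blocks. In the case $s := \inf \bigcup \rho \notin \bigcup\rho$ it uses frictionality only to produce some $r \in T_{>s}$ with $\card{h_i\rnd{\rnd{s,r}}} = 1$ (namely $r$ at or below the first deviation from $z_i$ after $s$), picks $q \in \rnd{s,r} \cap \bigcup\rho$ (which exists because the infimum $s$ is not attained), and argues that the block $S \ni q$ must absorb all of $\rnd{s,r}$ because $S \cup \rnd{s,r}$ is connected with $h_i$ constant on it; every other block of $\rho$ then lies entirely above $\rnd{s,r}$, so $S$ is the minimum of $\rho$. This local argument at the infimum is the idea your write-up is missing. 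Note, however, that it leans on $\rnd{s,r}$ being connected, which is the same order-topological point on which your reduction founders; so if you repair your proof along the paper's lines, you should make explicit what additional property of $T$ (density, hence connectedness of intervals) is being invoked there.
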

\begin{proof}
 See Appendix \ref{sec:proof_of_frictionality}.
\end{proof}

\section{Conclusion}\label{sec:conclusion}

We defined deterministic totally-ordered-time games.
We show that
for any tuple of strategies that satisfy the three axioms,
in any subgame,
there exists a unique complete history that is consistent with the strategy tuple.

It is a open question whether or not
there is a natural weaker set of axioms of strategies that makes any strategy tuple induces a unique complete history.

\newpage
\appendix
\section*{Appendix}

For any $t \in T$,
let
$
 \mathcal T^t
 :=
 \set
 {S \in \mathcal C^t}
 {S \ni t}
$.
Let $\mathcal T := \bigcup_{t \in T} \mathcal T^t$.
For
any $S \in 2^T$
and
any $t \in T$,
let
\begin{itemize}
 \item
 $\Pi^S$ be the set of partitions of $S$, 
 \item
 $\hat\Pi^S$ be the set of $\pi \in \Pi^S$ such that
 $\rnd{\pi,\leq}$ is a totally ordered set,
 \item
 $\hat{\hat\Pi}^S$ be the set of $\pi \in \hat\Pi^S$ such that
 $\rnd{\pi,\leq}$ is a well-ordered set,
 \item
 $\Pi^t := \Pi^{T_{\geq t}}$,
 \item
 $\hat\Pi^t := \hat\Pi^{T_{\geq t}}$
 and
 \item
 $\hat{\hat\Pi}^t := \hat{\hat\Pi}^{T_{\geq t}}$.
\end{itemize}
For any set $\mathcal A$,
let $C_{\mathcal A}$ be the set of $\map{f}{\mathcal A}\rnd{\bigcup \mathcal A}$
such that
for any $A \in \mathcal A$,
$f\rnd{A} \in A$.
For
any $t \in T$,
any $\pi,\rho \in \Pi^t$
and
any $\Upsilon \subset \Pi^t$,
let
\begin{itemize}
 \item
 $
  \pi \cap \rho
  :=
  \set
  {
   S \cap R
  }
  {
   \rnd{\rnd{S,R} \in \pi \times \rho}
   \wedge
   \rnd{S \cap R \neq \emptyset}
  }
 $
 and
 \item
 $
  \bigcap \Upsilon
  :=
  \set
  {
   \bigcap S\rnd{\Upsilon}
  }
  {
   \rnd{S \in C_\Upsilon}
   \wedge
   \rnd{\bigcap S\rnd{\Upsilon} \neq \emptyset}
  }
 $.\footnote
 {
  $S \in C_\Upsilon$ assigns each partition $\pi \in \Upsilon$ with a set $S\rnd{\pi} \in \pi$,
  and
  $\bigcap S\rnd{\Upsilon} = \bigcup_{\pi \in \Upsilon} S\rnd{\pi}$.
 }
\end{itemize}
For
any $i \in N$,
any $S \in 2^T$,
any $\sigma_i \in \Sigma_i$
and
any $\sigma \in \Sigma$,
let
\begin{itemize}
 \item
 $
  \ch_i^S\rnd{\sigma_i}
  := 
  \set
  {
   h \in H
  }
  {
   \rnd{\forall t \in S}
   \rnd{h_i\rnd{t} = \sigma_i^t\rnd{g}}
  }
 $
 and
 \item
 $
  \ch^S\rnd{\sigma}
  := 
  \set
  {
   h \in H
  }
  {
   \rnd{\forall t \in S}
   \rnd{h\rnd{t} = \sigma^t\rnd{g}}
  }
 $.
\end{itemize}

\section{Lemmas}\label{sec:lemmas}

\begin{lemma}\label{lem:increment_of_uniqueness}
 Let
 $i \in N$,
 $\sigma_i \in \Sigma_i$,
 $t \in T$,
 $s \in T_{\geq t}$,
 $S \in \mathcal T^t$
 and
 $h,g \in \ch_i^{\sqr{t,s} \cap S}\rnd{\sigma_i}$ such that
 $h^t = g^t$.
 Suppose that
 $h^{\sqrrnd{t,s} \cap S} = g^{\sqrrnd{t,s} \cap S}$.
 Then,
 $h_i^{\sqr{t,s} \cap S} = g_i^{\sqr{t,s} \cap S}$.
\end{lemma}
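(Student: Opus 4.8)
The idea is to reduce the claim to the single equality $h_i\rnd{s}=g_i\rnd{s}$ and then read it off from the non-anticipation of $\sigma_i$. Since $s\geq t$ we have $\sqr{t,s}=\sqrrnd{t,s}\cup\crly{s}$, hence $\sqr{t,s}\cap S=\rnd{\sqrrnd{t,s}\cap S}\cup\rnd{\crly{s}\cap S}$. The hypothesis $h^{\sqrrnd{t,s}\cap S}=g^{\sqrrnd{t,s}\cap S}$ already yields $h_i^{\sqrrnd{t,s}\cap S}=g_i^{\sqrrnd{t,s}\cap S}$, so if $s\notin S$ there is nothing more to prove. It therefore remains to handle the case $s\in S$, for which it suffices to show $h_i\rnd{s}=g_i\rnd{s}$.

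So assume $s\in S$. Here I would exploit the structure of $\mathcal T^t$. Since $S\in\mathcal T^t\subset\mathcal C^t$, the set $S$ is connected in $T^t$ with the order topology, and a connected subset of a totally ordered set (in its order topology) is order-convex, i.e.\ $a,b\in S$ and $a\leq c\leq b$ imply $c\in S$ --- otherwise $S$ would be separated by the relatively open sets $S\cap T_{<c}$ and $S\cap T_{>c}$. Because $t\in S$ (as $S\in\mathcal T^t$) and $s\in S$ with $t\leq s$, order-convexity gives $\sqr{t,s}\subset S$, so in particular $\sqrrnd{t,s}\cap S=\sqrrnd{t,s}$. Thus the hypothesis actually says that $h$ and $g$ agree on all of $\sqrrnd{t,s}$; combining this with $h^t=g^t$ and the identity $T_{<s}=T_{<t}\cup\sqrrnd{t,s}$ gives $h^s=g^s$.

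Since $\sigma_i\in\Sigma_i$, the equality $h^s=g^s$ forces $\sigma_i^s\rnd{h}=\sigma_i^s\rnd{g}$. As $s\in\sqr{t,s}\cap S$ and $h,g\in\ch_i^{\sqr{t,s}\cap S}\rnd{\sigma_i}$, consistency at $s$ gives $h_i\rnd{s}=\sigma_i^s\rnd{h}=\sigma_i^s\rnd{g}=g_i\rnd{s}$. Together with $h_i^{\sqrrnd{t,s}\cap S}=g_i^{\sqrrnd{t,s}\cap S}$ and $\sqr{t,s}\cap S=\rnd{\sqrrnd{t,s}\cap S}\cup\crly{s}$, this yields $h_i^{\sqr{t,s}\cap S}=g_i^{\sqr{t,s}\cap S}$, as required. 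The argument is essentially bookkeeping; the only step needing a moment's care is the order-convexity of connected subsets of a totally ordered set in the order topology, since that is precisely what upgrades agreement on $\sqrrnd{t,s}\cap S$ to agreement on $\sqrrnd{t,s}$ and lets the non-anticipation of $\sigma_i$ carry the equality to the endpoint $s$.
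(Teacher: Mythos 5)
Your proof is correct and follows essentially the same route as the paper's: split on whether $s\in S$, use the order-convexity of the connected set $S$ (which contains $t$) to get $\sqr{t,s}\subset S$ and hence $h^s=g^s$, and then invoke the non-anticipation condition defining $\Sigma_i$ together with consistency at $s$ to conclude $h_i\rnd{s}=g_i\rnd{s}$. Your treatment of the case $s\notin S$ is marginally more direct (you just observe $\crly{s}\cap S=\emptyset$), and you make explicit the convexity argument that the paper uses implicitly, but the substance is the same.
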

\begin{proof}
 Consider the case where
 $s \in S$.
 Then,
 $
  \sqrrnd{t,s}
  \subset
  \sqr{t,s}
  \subset
  S
 $.
 Thus,
 by the supposition,
 $h^s = g^s$.
 Hence,
 $
  h_i\rnd{s} 
  =
  \sigma_i^s\rnd{h}
  =
  \sigma_i^s\rnd{g}
  =
  g_i\rnd{s}
 $.
 Thus,
 $h_i^{\sqr{t,s}} = g_i^{\sqr{t,s}}$.
 Note that
 $\sqr{t,s} \subset S$.
 Then,
 $h_i^{\sqr{t,s} \cap S} = g_i^{\sqr{t,s} \cap S}$.

 Consider the case where
 $s \notin S$.
 Then,
 because $S \in \mathcal T^t$,
 for any $r \in S$,
 $r < s$.
 Thus,
 $
  S
  \subset
  \sqrrnd{t,s}
  \subset
  \sqr{t,s}
 $.
 Thus,
 by the supposition,
 $h_i^{\sqr{t,s} \cap S} = g_i^{\sqr{t,s} \cap S}$.
\end{proof}

\begin{lemma}\label{lem:increment_of_uniqueness'}
 Let
 $\sigma \in \Sigma$,
 $t \in T$,
 $s \in T_{\geq t}$,
 $S \in \mathcal T^t$
 and
 $h,g \in \ch^{\sqr{t,s} \cap S}\rnd{\sigma}$ such that
 $h^t = g^t$.
 Suppose that
 $h^{\sqrrnd{t,s} \cap S} = g^{\sqrrnd{t,s} \cap S}$.
 Then,
 $h^{\sqr{t,s} \cap S} = g^{\sqr{t,s} \cap S}$.
\end{lemma}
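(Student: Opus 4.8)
The plan is to reduce the statement to the single-player Lemma \ref{lem:increment_of_uniqueness} by applying it coordinate by coordinate over $i \in N$ and then reassembling.

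First I would record the elementary containment that for any $S' \subset T$ and any $i \in N$ one has $\ch^{S'}\rnd{\sigma} \subset \ch_i^{S'}\rnd{\sigma_i}$: indeed, if $h \in \ch^{S'}\rnd{\sigma}$ then for every $r \in S'$ we have $h\rnd{r} = \sigma^r\rnd{h} = \rnd{\sigma_j^r\rnd{h}}_{j \in N}$, and reading off the $i$-th coordinate gives $h_i\rnd{r} = \sigma_i^r\rnd{h}$, i.e.\ $h \in \ch_i^{S'}\rnd{\sigma_i}$. Applying this with $S' = \sqr{t,s} \cap S$, both $h$ and $g$ lie in $\ch_i^{\sqr{t,s} \cap S}\rnd{\sigma_i}$ for every $i \in N$.

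Next, fix an arbitrary $i \in N$. From the hypotheses of the present lemma we have $h^t = g^t$ and $h^{\sqrrnd{t,s} \cap S} = g^{\sqrrnd{t,s} \cap S}$, and we just showed $h,g \in \ch_i^{\sqr{t,s} \cap S}\rnd{\sigma_i}$; together with $S \in \mathcal T^t$ these are precisely the hypotheses of Lemma \ref{lem:increment_of_uniqueness}, so it yields $h_i^{\sqr{t,s} \cap S} = g_i^{\sqr{t,s} \cap S}$. Since $i \in N$ was arbitrary, this equality holds for all $i \in N$; as equality of $H$-valued functions on a set means equality in every coordinate, we conclude $h^{\sqr{t,s} \cap S} = g^{\sqr{t,s} \cap S}$.

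I do not expect a genuine obstacle here — the proof is a routine componentwise reduction. The only point requiring a moment's care is that the consistency notions for $\sigma$ and for the individual $\sigma_i$ are linked exactly by the defining identity $\sigma^r\rnd{h} = \rnd{\sigma_i^r\rnd{h}}_{i \in N}$, which is what makes both the initial containment and the final reassembly legitimate.
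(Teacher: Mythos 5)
Your proof is correct and follows essentially the same route as the paper's: apply Lemma \ref{lem:increment_of_uniqueness} coordinatewise for each $i \in N$ and reassemble. The paper leaves the containment $\ch^{\sqr{t,s} \cap S}\rnd{\sigma} \subset \ch_i^{\sqr{t,s} \cap S}\rnd{\sigma_i}$ implicit, whereas you spell it out, but this is only a difference in level of detail.
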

\begin{proof}
 By Lemma \ref{lem:increment_of_uniqueness},
 for any $i \in N$,
 $h_i^{\sqr{t,s} \cap S} = g_i^{\sqr{t,s} \cap S}$.
 Thus,
 $h^{\sqr{t,s} \cap S} = g^{\sqr{t,s} \cap S}$.
\end{proof}

\begin{lemma}\label{lem:increment_of_existence}
 Let
 $i \in N$,
 $\sigma_i \in \Sigma_i$,
 $t \in T$,
 $h \in H$
 and
 $s \in T_{\geq t}$.
 Suppose that
 $\sh_i^t\rnd{h} \cap \ch_i^{\sqrrnd{t,s}}\rnd{\sigma_i} \neq \emptyset$.
 Then,
 $\sh_i^t\rnd{h} \cap \ch_i^{\sqr{t,s}}\rnd{\sigma_i} \neq \emptyset$.
\end{lemma}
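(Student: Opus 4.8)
The plan is to start from a witness of the hypothesis and perturb it at a single time. Pick $g \in \sh_i^t(h) \cap \ch_i^{\sqrrnd{t,s}}(\sigma_i)$ and define $g' \in H$ by $g'(r) := g(r)$ for every $r \in T$ with $r \neq s$, together with $g'(s)_i := \sigma_i^s(g)$ (an element of $A_i$, so $g' \in H$ indeed) and $g'(s)_j := g(s)_j$ for every $j \in N \setminus \crly{i}$. The one observation that makes everything work is that $g'$ and $g$ differ only at time $s$, so for every $r \in T$ with $r < s$ we have $s \notin T_{< r}$ and hence the restrictions of $g'$ and $g$ to $T_{< r}$ agree; since every $\sigma_i \in \Sigma_i$ is, by definition, a function of the strictly-earlier history only, this gives $\sigma_i^r(g') = \sigma_i^r(g)$, and of course $g'_i(r) = g_i(r)$ for such $r$. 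The same reasoning at $r = s$ (using $s \notin T_{< s}$) yields $\sigma_i^s(g') = \sigma_i^s(g)$.

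Next I would verify the two membership claims. Because $s \in T_{\geq t}$ we have $s \notin T_{< t}$, so the restrictions of $g'$ and $g$ to $T_{< t}$ coincide, giving $(g')^t = g^t = h^t$; and $g'_{-i} = g_{-i} = h_{-i}$ since only the $i$-th coordinate at $s$ was altered. Hence $g' \in \sh_i^t(h)$. For $g' \in \ch_i^{\sqr{t,s}}(\sigma_i)$ I must check $g'_i(r) = \sigma_i^r(g')$ for every $r \in \sqr{t,s}$. For $r \in \sqrrnd{t,s}$ this equation is, by the observation above, equivalent to $g_i(r) = \sigma_i^r(g)$, which holds because $g \in \ch_i^{\sqrrnd{t,s}}(\sigma_i)$. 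For $r = s$ it reads $g'_i(s) = \sigma_i^s(g')$, which holds by the definition of $g'$ together with $\sigma_i^s(g') = \sigma_i^s(g)$. Therefore $g' \in \sh_i^t(h) \cap \ch_i^{\sqr{t,s}}(\sigma_i)$, and this set is nonempty.

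I do not expect a genuine obstacle: the argument is just the totally-ordered-time analogue of extending a consistent history by one step. The only point requiring (mild) care is the one isolated in the first paragraph — overwriting the action at time $s$ must not feed back into $\sigma_i$'s prescription at any time $r \leq s$ — and this is immediate from the adaptedness condition built into the definition of $\Sigma_i$.
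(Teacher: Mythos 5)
Your proof is correct and takes essentially the same approach as the paper: the paper likewise modifies the witness $g$ only at time $s$ by choosing $f \in \sh_i^s\rnd{g}$ with $f_i\rnd{s} = \sigma_i^s\rnd{g}$ and then verifies consistency on $\sqr{t,s}$ using the adaptedness of $\sigma_i$, your $g'$ being just an explicit instance of such an $f$.
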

\begin{proof}
 By the supposition,
 there exists $g \in \sh_i^t\rnd{h} \cap \ch_i^{\sqrrnd{t,s}}\rnd{\sigma_i}$.
 There exists $f \in \sh_i^s\rnd{g}$ such that
 $f_i\rnd{s} = \sigma_i^s\rnd{g}$.
 For any $r \in \sqrrnd{t,s}$,
 because
 $f^s = g^s$,
 $g \in \ch_i^{\sqrrnd{t,s}}\rnd{\sigma_i}$,
 and
 $f^r = g^r$,
 $
  f_i\rnd{r}
  =
  g_i\rnd{r}
  =
  \sigma_i^r\rnd{g}
  =
  \sigma_i^r\rnd{f}
 $.
 Because
 $f_i\rnd{s} = \sigma_i^s\rnd{g}$,
 and
 $f^s = g^s$,
 $f_i\rnd{s} = \sigma_i^s\rnd{f}$.
 Thus,
 $f \in \ch_i^{\sqr{t,s}}\rnd{\sigma_i}$.
 Note that
 because
 $f \in \sh_i^s\rnd{g}$,
 and
 $t < s$,
 $f \in \sh_i^t\rnd{g}$,
 and
 thus,
 because $g \in \sh_i^t\rnd{h}$,
 $f \in \sh_i^t\rnd{h}$.
 Then,
 $f \in \sh_i^t\rnd{h} \cap \ch_i^{\sqr{t,s}}\rnd{\sigma_i}$.
 Thus,
 $\sh_i^t\rnd{h} \cap \ch_i^{\sqr{t,s}}\rnd{\sigma_i} \neq \emptyset$.
\end{proof}

\begin{lemma}\label{lem:increment_of_existence'}
 Let
 $\sigma \in \Sigma$,
 $t \in T$,
 $h \in H$
 and
 $s \in T_{\geq t}$.
 Suppose that
 $\sh^t\rnd{h} \cap \ch^{\sqrrnd{t,s}}\rnd{\sigma} \neq \emptyset$.
 Then,
 $\sh^t\rnd{h} \cap \ch^{\sqr{t,s}}\rnd{\sigma} \neq \emptyset$.
\end{lemma}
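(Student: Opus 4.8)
The plan is to adapt the proof of Lemma \ref{lem:increment_of_existence} to the full strategy tuple. By the supposition, fix $g \in \sh^t\rnd{h} \cap \ch^{\sqrrnd{t,s}}\rnd{\sigma}$. I would then graft onto $g$ the action tuple that $\sigma$ prescribes at $s$: let $f \in H$ be the complete history with $f\rnd{r} = g\rnd{r}$ for all $r \in T \setminus \crly{s}$ and $f\rnd{s} = \sigma^s\rnd{g}$. Such an $f$ exists because $\sigma^s\rnd{g} = \rnd{\sigma_i^s\rnd{g}}_{i \in N} \in \prod_{i \in N} A_i$, and by construction $f \in \sh^s\rnd{g}$, since $f$ and $g$ agree on $T_{< s}$.

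First I would verify $f \in \sh^t\rnd{h}$. Since $s \geq t$, we have $T_{< t} \subset T_{< s}$, so $f^s = g^s$ forces $f^t = g^t$; as $g \in \sh^t\rnd{h}$ gives $g^t = h^t$, it follows that $f^t = h^t$, i.e., $f \in \sh^t\rnd{h}$.

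Next I would verify $f \in \ch^{\sqr{t,s}}\rnd{\sigma}$, that is, $f\rnd{q} = \sigma^q\rnd{f}$ for every $q \in \sqr{t,s}$. Since $s \geq t$, we have $\sqr{t,s} = \sqrrnd{t,s} \cup \crly{s}$, so it suffices to treat two cases. For $q = s$: $f^s = g^s$ implies $\sigma^s\rnd{f} = \sigma^s\rnd{g} = f\rnd{s}$, using the defining property of each $\Sigma_i$ that the action prescribed at a time depends only on the strictly-earlier history. For $q \in \sqrrnd{t,s}$, so $q < s$: $f^s = g^s$ yields both $f^q = g^q$, hence $\sigma^q\rnd{f} = \sigma^q\rnd{g}$, and $f\rnd{q} = g\rnd{q}$; moreover $g \in \ch^{\sqrrnd{t,s}}\rnd{\sigma}$ gives $g\rnd{q} = \sigma^q\rnd{g}$. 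Chaining these equalities gives $f\rnd{q} = \sigma^q\rnd{f}$. Hence $f \in \sh^t\rnd{h} \cap \ch^{\sqr{t,s}}\rnd{\sigma}$, so this set is nonempty, as required.

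I expect no genuine obstacle here: the statement is the tuple analogue of Lemma \ref{lem:increment_of_existence} and the argument is essentially the same. The only point that wants a moment's attention is the degenerate case $s = t$, in which $\sqrrnd{t,s} = \emptyset$, so the hypothesis reduces to $\sh^t\rnd{h} \neq \emptyset$ and the conclusion only asks for a history whose value at $t$ agrees with $\sigma^t$; the construction above already covers this uniformly. One might hope instead to reduce coordinatewise to Lemma \ref{lem:increment_of_existence}, but this does not go through cleanly, because the grafted histories it produces for different players $i$ need not agree on $T_{\geq t}$ outside their respective coordinates, so there is no single $f$ that works for all $i$ simultaneously; the direct construction above avoids this issue.
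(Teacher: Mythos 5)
Your proof is correct and follows essentially the same route as the paper's: take a consistent history $g$ on $\sqrrnd{t,s}$, graft the action tuple $\sigma^s\rnd{g}$ at time $s$ to obtain $f\in\sh^s\rnd{g}$, and check consistency on $\sqr{t,s}$ using the fact that strategies depend only on the strictly-earlier history. The only cosmetic difference is that you pin down $f$ to agree with $g$ everywhere off $s$, whereas the paper merely invokes some $f\in\sh^s\rnd{g}$ with $f\rnd{s}=\sigma^s\rnd{g}$; your closing remarks about the $s=t$ case and the failure of a coordinatewise reduction are both sound.
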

\begin{proof}
 By the supposition,
 there exists $g \in \sh^t\rnd{h} \cap \ch^{\sqrrnd{t,s}}\rnd{\sigma}$.
 There exists $f \in \sh^s\rnd{g}$ such that
 $f\rnd{s} = \sigma^s\rnd{g}$.
 For any $r \in \sqrrnd{t,s}$,
 because
 $f^s = g^s$,
 $g \in \ch^{\sqrrnd{t,s}}\rnd{\sigma}$,
 and
 $f^r = g^r$,
 $
  f\rnd{r}
  =
  g\rnd{r}
  =
  \sigma^r\rnd{g}
  =
  \sigma^r\rnd{f}
 $.
 Because
 $f\rnd{s} = \sigma^s\rnd{g}$,
 and
 $f^s = g^s$,
 $f\rnd{s} = \sigma^s\rnd{f}$.
 Thus,
 $f \in \ch^{\sqr{t,s}}\rnd{\sigma}$.
 Note that
 because
 $f \in \sh^s\rnd{g}$,
 and
 $t < s$,
 $f \in \sh^t\rnd{g}$,
 and
 thus,
 because $g \in \sh^t\rnd{h}$,
 $f \in \sh^t\rnd{h}$.
 Then,
 $f \in \sh^t\rnd{h} \cap \ch^{\sqr{t,s}}\rnd{\sigma}$.
 Thus,
 $\sh^t\rnd{h} \cap \ch^{\sqr{t,s}}\rnd{\sigma} \neq \emptyset$.
\end{proof}

\section{Proof of Lemma \ref{lem:totally_ordered_partition}}\label{sec:proof_of_totally_ordered_partition}

\begin{lemma}\label{lem:partition}
 $\pi_i^t\rnd{h} \in \Pi^t$.
\end{lemma}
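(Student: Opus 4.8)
The plan is to first translate the defining clause of $\pi_i^t\rnd{h}$. Unwinding it: $S \in \pi_i^t\rnd{h}$ iff $S \in \mathcal C^t \setminus \crly{\emptyset}$ and, for every connected $R \subseteq T^t$ with $R \supseteq S$, one has $R = S \Leftrightarrow \card{h_i\rnd{R}} = 1$. Taking $R = S$ forces $\card{h_i\rnd{S}} = 1$, i.e.\ $h_i$ is constant on $S$; taking a connected $R \supsetneq S$ forces $\card{h_i\rnd{R}} \neq 1$, i.e.\ $h_i$ is non-constant on every connected proper superset of $S$ in $T^t$. Hence $\pi_i^t\rnd{h}$ is exactly the family of maximal connected subsets of $T^t$ on which $h_i$ is constant. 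To conclude $\pi_i^t\rnd{h} \in \Pi^t$, i.e.\ that it is a partition of $T^t$, it then suffices to show that its members are nonempty (immediate, since $\pi_i^t\rnd{h} \subseteq \mathcal C^t \setminus \crly{\emptyset}$), that they cover $T^t$, and that they are pairwise disjoint.

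For the covering, I would fix $s \in T^t$, put $Y := \set{r \in T^t}{h_i\rnd{r} = h_i\rnd{s}}$, and take $S$ to be the connected component of $s$ in $Y$. Since the subspace topology $S$ inherits from $Y$ coincides with the one it inherits from $T^t$ (as $S \subseteq Y \subseteq T^t$), $S$ is a connected subset of $T^t$, so $S \in \mathcal C^t$; it contains $s$; and $\card{h_i\rnd{S}} = 1$ because $S \subseteq Y$. If $R \in \mathcal C^t$ with $R \supseteq S$ and $\card{h_i\rnd{R}} = 1$, then the constant value of $h_i$ on $R$ is $h_i\rnd{s}$ (as $s \in S \subseteq R$), so $R \subseteq Y$; thus $R$ is a connected subset of $Y$ containing $s$, which gives $R \subseteq S$ and hence $R = S$. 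So $S \in \pi_i^t\rnd{h}$ with $s \in S$, and since $s$ was arbitrary, $\bigcup \pi_i^t\rnd{h} = T^t$. The only non-elementary ingredient is the existence of connected components, which rests on the choice-free fact that a union of connected sets with a common point is connected; alternatively one could run Zorn's lemma on the inclusion-ordered family of connected subsets of $T^t$ that contain $s$ and on which $h_i$ is constant, a chain in which has its union as an upper bound.

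For pairwise disjointness, suppose $S, S' \in \pi_i^t\rnd{h}$ and $S \cap S' \neq \emptyset$; pick $r \in S \cap S'$. Then $S \cup S'$ is a union of two connected sets with the common point $r$, hence connected, so $S \cup S' \in \mathcal C^t$; and $h_i$ equals $h_i\rnd{r}$ on each of $S$ and $S'$, hence on $S \cup S'$, so $\card{h_i\rnd{S \cup S'}} = 1$. Applying the defining clause for $S$ with $R := S \cup S' \supseteq S$ yields $S \cup S' = S$, and symmetrically $S \cup S' = S'$; therefore $S = S'$, so distinct members are disjoint, and $\pi_i^t\rnd{h}$ is a partition of $T^t$.

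I expect the only real obstacle to be bookkeeping: reading the biconditional in the definition correctly — in particular that $\supset$ must be non-strict, since otherwise $\pi_i^t\rnd{h}$ would collapse to the set of all connected sets of constancy and fail to be a partition — and keeping straight that connectedness of a subset of $T^t$ is intrinsic, so that the connected component of $s$ in $Y$ genuinely is maximal among connected subsets of $T^t$ on which $h_i$ is constant. No property particular to $\rnd{T,\leq}$ — neither the existence of infima/suprema nor the order topology — is used; the same argument would work verbatim for $h_i$ defined on an arbitrary topological space.
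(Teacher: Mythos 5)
Your proof is correct and follows essentially the same route as the paper: for the cover you take the maximal connected set of constancy through a given $s$ (your ``connected component of $s$ in $Y$'' is exactly the paper's union of all connected subsets of $T^t$ containing $s$ on which $h_i$ equals $h_i\rnd{s}$), and your disjointness argument via $S \cup S'$ is identical to the paper's. The unpacking of the biconditional and the observation that only general topology is needed are both accurate.
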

\begin{proof}
 Show that
 $\pi_i^t\rnd{h}$ is a cover of $T^t$.
 Let $s \in T_{\geq t}$.
 Let
 \begin{align*}
  \mathcal S
  :=
  \set
  {
   R \in \mathcal C^t
  }
  {
   \rnd{R \ni s}
   \wedge
   \rnd{h_i\rnd{R} = \crly{h_i\rnd{s}}}
  }
 \end{align*}
 and
 $S := \bigcup \mathcal S$.
 Because
 $\crly{s} \in \mathcal C^t$,
 $\crly{s} \ni s$,
 and
 $h_i\rnd{\crly{s}} = \crly{h_i\rnd{s}}$,
 $\crly{s} \in \mathcal S$;
 thus,
 $\crly{s} \subset S$;
 hence,
 $s \in S$.
 By the following,
 $S \in \pi_i^t\rnd{h}$.
 \begin{itemize}
  \item
  Because the union of any connected sets that have a common member is connected,
  $S \in \mathcal C^t$.
  \item
  Because $S \ni s$,
  $S \neq \emptyset$.
  \item
  For any $r \in S$,
  by the definition of $S$,
  there exists $R \in \mathcal S$ such that
  $R \ni r$;
  $h_i\rnd{R} = \crly{h_i\rnd{s}}$;
  thus,
  $h_i\rnd{r} = h_i\rnd{s}$.
  Hence,
  $\card{h_i\rnd{S}} = 1$.
  \item
  Let $R \in \mathcal C^t$ such that
  $R \supsetneq S$.
  Then,
  $R \notin \mathcal S$.
  Note that
  $R \in \mathcal C^t$,
  and
  $R \supsetneq S \ni s$.
  Then,
  $h_i\rnd{R} \neq \crly{h_i\rnd{s}}$.
  Note that
  $R \ni s$,
  and
  thus,
  $h_i\rnd{R} \supset \crly{h_i\rnd{s}}$.
  Then,
  $\card{R} > 1$.
 \end{itemize}
 Thus,
 for any $s \in T^t$,
 there exists $S \in \pi_i^t\rnd{h}$ such that
 $S \ni s$.

 Show that
 for any two sets in $\pi_i^t\rnd{h}$,
 if they have a common member,
 they are the same set.
 Let $S,R \in \pi_i^t\rnd{h}$.
 Suppose that
 there exists $s \in S \cap R$.
 Because
 $S,R \in \mathcal C^t$,
 and
 $s \in S \cap R$,
 $S \cup R \in \mathcal C^t$.
 For any $Q \in \crly{S,R}$,
 because
 $\card{h_i\rnd{Q}} = 1$,
 and
 $Q \ni s$,
 $h_i\rnd{Q} = \crly{h_i\rnd{s}}$;
 thus,
 $h_i\rnd{S \cup R} = \crly{h_i\rnd{s}}$;
 hence,
 $\card{h_i\rnd{S \cup R}} = 1$.
 Thus,
 for any $Q \in \crly{S,R}$
 by the definition of $\pi_i^t\rnd{h}$ and $Q \in \pi_i^t\rnd{h}$,
 $S \cup R = Q$.
 Thus,
 $S = R$.

 Hence,
 $\pi_i^t\rnd{h} \in \Pi^t$.
\end{proof}

\begin{lemma}\label{lem:totally_orderedness}
 $\pi_i^t\rnd{h} \in \hat\Pi^t$.
\end{lemma}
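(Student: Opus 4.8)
The plan is to invoke Lemma~\ref{lem:partition}, which already gives that $\pi_i^t\rnd{h}$ is a partition of $T^t$, so it only remains to check that $\rnd{\pi_i^t\rnd{h},\leq}$ is totally ordered; that is, any two distinct blocks $S,R\in\pi_i^t\rnd{h}$ are $\leq$-comparable. Two features of these blocks will be used: by definition $\pi_i^t\rnd{h}\subset\mathcal C^t\setminus\crly{\emptyset}$, so $S$ and $R$ are nonempty and connected in $T^t$; and since $\pi_i^t\rnd{h}$ is a partition and $S\neq R$, we have $S\cap R=\emptyset$.

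First I would record the standard topological fact that a connected subset $C$ of the linearly ordered space $T^t$ is order-convex: if $a,b\in C$ with $a<c<b$ and $c\notin C$, then (noting $c>a\geq t$, so $c\in T^t$) the relatively open sets $\set{x\in C}{x<c}$ and $\set{x\in C}{x>c}$ would separate $C$, a contradiction. Hence $S$ and $R$ are order-convex subsets of $T$.

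Next I would establish the order-theoretic core: two disjoint, nonempty, order-convex subsets $S,R$ of a linearly ordered set satisfy $\rnd{\forall s\in S}\rnd{\forall r\in R}\rnd{s<r}$ or $\rnd{\forall s\in S}\rnd{\forall r\in R}\rnd{s>r}$. For each $s\in S$, since $s\notin R$ and $R$ is convex, either $s<r$ for all $r\in R$ or $s>r$ for all $r\in R$ (otherwise some $r_-<s<r_+$ with $r_-,r_+\in R$ would force $s\in R$). Split $S$ into $S_{<}$ and $S_{>}$ accordingly. If both were nonempty, choosing $s_-\in S_{<}$, $s_+\in S_{>}$ and any $r\in R$ gives $s_-<r<s_+$ with $s_-,s_+\in S$, so $r\in S$ by convexity of $S$, contradicting $S\cap R=\emptyset$. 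Hence $S=S_{<}$ or $S=S_{>}$.

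Finally I would apply this to the blocks $S,R$: in the first case $S\leq R$, in the second $R\leq S$; together with $S\leq S$ this shows $\rnd{\pi_i^t\rnd{h},\leq}$ is totally ordered, i.e.\ $\pi_i^t\rnd{h}\in\hat\Pi^t$. There is no real obstacle here: the only non-mechanical ingredient is the connected-implies-order-convex fact, which is routine, and the rest is bookkeeping with the linear order and the disjointness of distinct partition blocks.
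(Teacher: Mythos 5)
Your proof is correct and follows essentially the same route as the paper's: both reduce totality to the facts that distinct blocks are disjoint (Lemma \ref{lem:partition}) and connected, hence order-convex, so that one block must lie entirely below the other. The only differences are cosmetic --- you actually prove the connected-implies-order-convex step, which the paper merely asserts, and you rely on the earlier declaration that $\leq$ is a partial order on $2^T \setminus \crly{\emptyset}$ rather than re-verifying antisymmetry and transitivity as the paper's proof does.
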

\begin{proof}
 Let $S,R \in \pi_i^t\rnd{h}$.
 Suppose that
 $S \leq R$,
 and
 $R \leq S$.
 Suppose that
 $S \neq R$
 (assumption for contradiction).
 Then,
 $S < R$,
 and
 $R < S$.
 There exist $s \in S$ and $r\in R$.
 Because
 $S < R$,
 and
 $R < S$,
 $s < r$,
 and
 $r < s$,
 which is a contradiction.
 Thus,
 $S = R$.
 Hence,
 $\leq$ satisfies antisymmetry.

 Let $S,R,Q \in \pi_i^t\rnd{h}$.
 Suppose that
 $S \leq R$,
 and
 $R \leq Q$.
 Consider the case where
 $S = R$,
 or
 $R = Q$.
 Then,
 $S \leq Q$.
 Consider the case
 $S \neq R$,
 and
 $R \neq Q$.
 Then,
 $S < R$,
 and
 $R < Q$.
 Let
 $s \in S$
 and
 $q \in Q$.
 There exists $r \in R$.
 Because
 $S < R$,
 and
 $R < Q$,
 $s < r$,
 and
 $r < q$.
 Thus,
 $s < q$.
 Hence,
 $S < Q$.
 Thus,
 $\leq$ satisfies transitivity.

 Let $S,R \in \pi_i^t\rnd{h}$.
 Suppose that
 $
  \neg
  \rnd
  {
   \rnd{S \leq R}
   \vee
   \rnd{R \leq S}
  }
 $
 (assumption for contradiction).
 Note that
 \begin{align*}
  \neg
  \rnd
  {
   \rnd{S \leq R}
   \vee
   \rnd{R \leq S}
  }&
  \leftrightarrow
  \neg
  \rnd{S \leq R}
  \wedge
  \neg
  \rnd{R \leq S}\\&
  \leftrightarrow
  \neg
  \rnd
  {
   S < R
   \vee
   S = R
  }
  \wedge
  \neg
  \rnd
  {
   R < S
   \vee
   R = S
  }\\&
  \leftrightarrow
  \neg
  \rnd{S < R}
  \wedge
  \neg
  \rnd{R < S}
  \wedge
  \rnd{S \neq R}.
 \end{align*}
 Then,
 there exist $l_S,u_S \in S$ and $l_R,u_R \in R$ such that
 $
  \neg
  \rnd{u_S < l_R}
  \wedge
  \neg
  \rnd{u_R < l_S}
 $;
 by Lemma \ref{lem:partition},
 $l_S,u_S \notin R$,
 and
 $l_R,u_R \notin S$.
 Consider the case where
 $l_S \leq l_R$.
 Then,
 because $\neg \rnd{u_S < l_R}$,
 $l_S \leq l_R \leq u_S$.
 Note that
 $S$ is a connected set in $T^t$ equipped with the order topology on $\rnd{T^t,\leq}$,
 thus,
 $S$ is an interval in $\rnd{T^t,\leq}$,
 and
 hence,
 $\sqr{l_S,u_S} \subset S$.
 Then,
 $l_R \in S$,
 which contradicts that
 $l_R \notin S$.
 Similarly,
 in the case where
 $l_R \leq l_S$,
 $l_S \in R$,
 which contradicts that
 $l_S \notin R$.
 Thus,
 $
  \rnd{S \leq R}
  \vee
  \rnd{R \leq S}
 $.
 Hence,
 $\leq$ satisfies totality.

 Thus,
 $\rnd{\pi_i^t\rnd{h},\leq}$ is a totally ordered set.
 Hence,
 by Lemma \ref{lem:partition},
 $\pi_i^t\rnd{h} \in \hat\Pi^t$.
\end{proof}

The conclusion follows from Lemma \ref{lem:totally_orderedness}.
\qed

\section{Proof of Theorem \ref{thm:unique_existence_of_consistent_history}}\label{sec:proof_of_unique_existence_of_consistent_history}

Let $t \in T$.
Then,
Lemmas \ref{lem:intersection_of_two_partitions} and \ref{lem:intersection_of_finite_partitions} are shown.

\begin{lemma}\label{lem:intersection_of_two_partitions}
 Let $\pi,\rho \in \hat{\hat\Pi}^t$.
 Then,
 $\pi \cap \rho \in \hat{\hat\Pi}^t$.
\end{lemma}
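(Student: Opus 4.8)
The plan is to verify, in turn, the three properties defining membership in $\hat{\hat\Pi}^t$: that $\pi \cap \rho$ is a partition of $T_{\geq t}$, that $\rnd{\pi \cap \rho, \leq}$ is totally ordered, and that it is well-ordered. Throughout I use that each block $B \in \pi \cap \rho$ arises from a \emph{unique} pair consisting of $S_B \in \pi$ and $R_B \in \rho$ with $B = S_B \cap R_B$ (uniqueness because $\pi$ and $\rho$ are partitions). The partition claim $\pi \cap \rho \in \Pi^t$ is routine: every $x \in T_{\geq t}$ lies in exactly one block of $\pi$ and exactly one block of $\rho$, hence in exactly one nonempty intersection, so these intersections cover $T_{\geq t}$ and are pairwise disjoint.

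For total order, take distinct blocks $B, B' \in \pi \cap \rho$. Since $\pi$ is totally ordered, $S_B$ and $S_{B'}$ are $\leq$-comparable; likewise $R_B$ and $R_{B'}$ by total order of $\rho$. If $S_B \neq S_{B'}$, say $S_B < S_{B'}$, then every member of $B \subset S_B$ precedes every member of $B' \subset S_{B'}$, and with $B \neq B'$ this gives $B < B'$. If $S_B = S_{B'}$, then necessarily $R_B \neq R_{B'}$ (otherwise $B = B'$), and applying the same reasoning to the comparable pair $R_B, R_{B'}$ gives $\leq$-comparability of $B$ and $B'$. Hence $\rnd{\pi \cap \rho, \leq}$ is a totally ordered set.

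Well-orderedness is the substantive step, which I would prove by a two-stage (lexicographic) minimization. Let $\emptyset \neq \mathcal B \subset \pi \cap \rho$. Because $\pi$ is well-ordered, $\set{S_B}{B \in \mathcal B}$ has a minimum $S^* \in \pi$; because $\rho$ is well-ordered, $\set{R_B}{B \in \mathcal B,\ S_B = S^*}$ has a minimum $R^* \in \rho$. Choosing $B_0 \in \mathcal B$ with $S_{B_0} = S^*$ and $R_{B_0} = R^*$ shows $B^* := S^* \cap R^* = B_0 \in \mathcal B$. I then claim $B^* = \min \mathcal B$: given $B \in \mathcal B$ with $B \neq B^*$, either $S^* < S_B$, whence $B^* \subset S^*$ lies entirely below $B \subset S_B$ and so $B^* < B$; or $S_B = S^*$, whence $R^* < R_B$ (since $R_B \neq R^*$, as $B \neq B^*$) and so $B^* \subset R^*$ lies entirely below $B \subset R_B$, giving again $B^* < B$. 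Thus every nonempty subset of $\pi \cap \rho$ has a minimum, so $\rnd{\pi \cap \rho, \leq}$ is well-ordered, and therefore $\pi \cap \rho \in \hat{\hat\Pi}^t$.

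The step I expect to be the main obstacle is this well-orderedness argument, and within it the bookkeeping: one must ensure that the two independently chosen minima are compatible, i.e., that $S^* \cap R^*$ really is a block belonging to $\mathcal B$ — which works because $R^*$ is minimized over the fibre $\set{B \in \mathcal B}{S_B = S^*}$, not over all of $\mathcal B$ — and then check that minimality propagates through both the $\pi$-coordinate and the $\rho$-coordinate. I do not expect any other obstacle; the partition and total-order steps are the standard facts about common refinements.
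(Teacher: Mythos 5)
Your proof is correct, and the decisive step --- well-orderedness --- is argued by a genuinely different route from the paper's. The paper minimizes the two coordinates \emph{independently}: it forms $\pi' := \set{S \in \pi}{\rnd{\exists R \in \rho}\rnd{R \cap S \in \tau}}$ and the analogous $\rho'$, takes the minimum of each, and then needs a separate argument by contradiction (producing $s_\pi < s_\rho$ and $s_\rho < s_\pi$) to show that the intersection of the two minima actually belongs to $\tau$. You instead minimize \emph{lexicographically}: first the $\pi$-coordinate over all of $\mathcal B$, then the $\rho$-coordinate over the fibre $\set{B \in \mathcal B}{S_B = S^*}$, which makes membership $B^* \in \mathcal B$ automatic and eliminates the contradiction step; the verification that $B^*$ is least then splits cleanly into the two cases $S^* < S_B$ and $S_B = S^*$. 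Your total-order argument also differs: the paper derives comparability from the blocks of $\pi \cap \rho$ being connected, hence intervals (in fact it implicitly assumes the blocks of $\pi$ and $\rho$ lie in $\mathcal C^t$, which is not part of the definition of $\hat{\hat\Pi}^t$), whereas you deduce it directly from comparability of the components in $\pi$ and in $\rho$, again case-splitting on whether the $\pi$-components coincide. Your route buys a shorter, purely order-theoretic argument that uses no topology and no proof by contradiction; the paper's route has the mild advantage that the independent minimization treats $\pi$ and $\rho$ symmetrically. Both establish the lemma.
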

\begin{proof}
 By the following,
 $\pi \cap \rho \in \Pi^t$.
 \begin{itemize}
  \item
  Let $s \in T^t$.
  There exists $S \in \pi$ and $R \in \rho$ such that
  $S \in s$,
  and
  $R \in s$.
  $S \cap R \in \pi \cap \rho$.
  $s \in S \cap R$.
  Thus,
  $\pi \cap \rho$ is a cover of $T^t$.
  \item
  Let $S,R \in \pi \cap \rho$.
  Suppose that
  there exists $s \in S \cap R$.
  For any $Q \in \crly{S,R}$,
  there exist
  $Q_\pi \in \pi$ and $Q_\rho \in \rho$ such that
  $Q_\pi \cap Q_\rho = Q$.
  For any $\tau \in \crly{\pi,\rho}$,
  because
  $S_\tau,R_\tau \in \tau$,
  and
  $S_\tau,R_\tau \ni s$,
  $S_\tau = R_\tau$.
  Thus,
  $S_\pi \cap S_\rho = R_\pi \cap R_\rho$.
  Hence,
  $S = R$.
 \end{itemize}

 For any $S \in \pi \cap \rho$,
 there exist $R \in \pi$ and $Q \in \rho$ such that
 $R \cap Q = S$,
 and
 $R \cap Q \neq \emptyset$;
 because
 $R,Q \in \mathcal C^t$,
 and
 $R \cap Q \neq \emptyset$,
 $S = R \cap Q \in \mathcal C^t$;
 because $S$ is a connected set in $T^t$ equipped with the order topology on $\rnd{T^t,\leq}$,
 $S$ is an interval in $\rnd{T^t,\leq}$.
 Thus,
 $\rnd{\pi \cap \rho,\leq}$ is a totally ordered set.
 Hence,
 $\pi \cap \rho \in \hat\Pi^t$.

 Let $\tau \in 2^{\pi \cap \rho} \setminus \crly{\emptyset}$.
 For any $\upsilon \in \crly{\pi,\rho}$,
 let
 $
  \upsilon'
  :=
  \set
  {
   S \in \upsilon
  }
  {
   \rnd{\exists R \in \xi}
   \rnd{R \cap S \in \tau}
  }
 $,
 where $\xi \in \crly{\pi,\rho} \setminus \crly{\upsilon}$.
 For any $\upsilon \in \crly{\pi,\rho}$,
 $\upsilon' \neq \emptyset$.
 Because $\pi,\rho \in \hat{\hat\Pi}$,
 for any $\upsilon \in \crly{\pi,\rho}$,
 there exists a minimum $S_\upsilon$ of $\upsilon'$.
 Let $S := S_\pi \cap S_\rho$.
 Suppose that
 $S \notin \tau$
 (assumption for contradiction).
 Then,
 for any $\upsilon \in \crly{\pi,\rho}$,
 there exists $R_\upsilon \in \upsilon \setminus \crly{S_\upsilon}$ such that
 $R_\upsilon \cap S_\xi \in \tau$,
 and
 thus,
 there exists $s_\upsilon \in R_\upsilon \cap S_\xi$.
 where $\xi \in \crly{\pi,\rho} \setminus \crly{\upsilon}$.
 For any $\upsilon$,
 by the definition of $\upsilon'$,
 $R_\upsilon \in \upsilon'$,
 and
 thus,
 because $S_\upsilon$ is a minimum of $\upsilon'$,
 $S_\upsilon < R_\upsilon$.
 Thus,
 for any $\upsilon$,
 because
 $s_\xi \in S_\upsilon$,
 and
 $s_\upsilon \in R_\upsilon$,
 $s_\xi < s_\upsilon$,
 where $\xi \in \crly{\pi,\rho} \setminus \crly{\upsilon}$.
 Hence,
 $s_\pi < s_\rho$,
 and
 $s_\rho < s_\pi$,
 which is a contradiction.
 Thus,
 $S \in \tau$.
 Let $Q \in \tau \setminus \crly{S}$.
 Then,
 there exist $Q_\pi \in \pi$ and $Q_\rho \in \rho$ such that
 $Q_\pi \cap Q_\rho = Q$.
 Because $Q \neq S$,
 there exists $\upsilon \in \crly{\pi,\rho}$ such that
 $Q_\upsilon \neq S_\upsilon$.
 By the definition of $\upsilon'$,
 $Q_\upsilon \in \upsilon'$.
 Thus,
 because $S_\upsilon$ is a minimum of $\upsilon'$,
 $S_\upsilon < Q_\upsilon$.
 Thus,
 $
  S
  =
  S_\pi \cap S_\rho
  <
  Q_\pi \cap Q_\rho
  =
  Q
 $.
 Hence,
 there exists a minimum of $\tau$.
 Thus,
 $\rnd{\pi \cap \rho,\leq}$ is a well-ordered set.
 Hence,
 $\pi \cap \rho \in \hat{\hat\Pi}^t$.
\end{proof}

\begin{lemma}\label{lem:intersection_of_finite_partitions}
 Let $\Upsilon \subset \hat{\hat\Pi}^t$ such that
 $\Upsilon$ is nonempty and finite.
 Then,
 $\bigcap \Upsilon \in \hat{\hat\Pi}^t$.
\end{lemma}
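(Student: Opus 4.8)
The plan is to induct on $\card{\Upsilon}$, using Lemma \ref{lem:intersection_of_two_partitions} for the inductive step after reducing the iterated intersection $\bigcap\Upsilon$ to a single binary intersection $\cap$ of partitions. For the base case $\card{\Upsilon} = 1$, write $\Upsilon = \crly{\pi}$; every $S \in C_\Upsilon$ merely selects a block $S\rnd{\pi} \in \pi$, and $\bigcap S\rnd{\Upsilon} = S\rnd{\pi}$, so $\bigcap\Upsilon = \pi$ because no block of a partition is empty, and hence $\bigcap\Upsilon = \pi \in \hat{\hat\Pi}^t$ by the hypothesis on $\Upsilon$.

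For the inductive step, suppose the claim holds whenever the index set has $n$ elements, and let $\card{\Upsilon} = n+1$. Fix $\pi_0 \in \Upsilon$, put $\Upsilon' := \Upsilon \setminus \crly{\pi_0}$, so that $\Upsilon'$ is nonempty with $\card{\Upsilon'} = n$. The key step I would verify is the set identity
\begin{align*}
 \bigcap\Upsilon = \rnd{\bigcap\Upsilon'} \cap \pi_0.
\end{align*}
For the inclusion $\subset$: given $S \in C_\Upsilon$ with $\bigcap S\rnd{\Upsilon} \neq \emptyset$, its restriction $S' := S|_{\Upsilon'}$ lies in $C_{\Upsilon'}$ and satisfies $\bigcap S'\rnd{\Upsilon'} \supset \bigcap S\rnd{\Upsilon} \neq \emptyset$ (because $\Upsilon' \subset \Upsilon$), so $\bigcap S'\rnd{\Upsilon'} \in \bigcap\Upsilon'$; moreover $\bigcap S\rnd{\Upsilon} = \rnd{\bigcap S'\rnd{\Upsilon'}} \cap S\rnd{\pi_0}$ with $S\rnd{\pi_0} \in \pi_0$, which is a nonempty element of $\rnd{\bigcap\Upsilon'} \cap \pi_0$. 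The reverse inclusion is symmetric: a member of $\rnd{\bigcap\Upsilon'} \cap \pi_0$ has the form $P \cap Q$ where $P = \bigcap S'\rnd{\Upsilon'}$ for some $S' \in C_{\Upsilon'}$, $Q \in \pi_0$, and $P \cap Q \neq \emptyset$; extending $S'$ to $S \in C_\Upsilon$ by $S\rnd{\pi_0} := Q$ gives $\bigcap S\rnd{\Upsilon} = P \cap Q \neq \emptyset$, so $P \cap Q \in \bigcap\Upsilon$.

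Granting the identity, the induction hypothesis gives $\bigcap\Upsilon' \in \hat{\hat\Pi}^t$, and $\pi_0 \in \hat{\hat\Pi}^t$ by assumption, so Lemma \ref{lem:intersection_of_two_partitions} yields $\rnd{\bigcap\Upsilon'} \cap \pi_0 \in \hat{\hat\Pi}^t$, that is, $\bigcap\Upsilon \in \hat{\hat\Pi}^t$, completing the induction. The only genuine work is the bookkeeping in the set identity — the bijective correspondence between choice functions on $\Upsilon$ and pairs consisting of a choice function on $\Upsilon'$ together with a block of $\pi_0$ — which is routine once the definition of $\bigcap$ is unwound; finiteness of $\Upsilon$ enters only to make the induction terminate.
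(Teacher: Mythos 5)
Your proof is correct and follows essentially the same route as the paper: induction on the cardinality of $\Upsilon$, the identity $\bigcap\Upsilon = \rnd{\bigcap\rnd{\Upsilon\setminus\crly{\pi_0}}}\cap\pi_0$ (which the paper establishes via a chain of equivalences and you establish via two inclusions), and an appeal to Lemma \ref{lem:intersection_of_two_partitions} for the inductive step. No substantive differences.
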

\begin{proof}
 It suffices to show that
 for any $n \in \mathbb N$ such that $n \leq \card{\Upsilon}$,
 for any $\Xi \subset \Upsilon$ such that $\card{\Xi} = n$,
 $\bigcap \Xi \in \hat{\hat\Pi}^t$.
 By mathematical induction,
 show it.

 For any $\Xi$ such that $\card{\Xi} = 1$,
 $\bigcap \Xi = \pi \in \hat{\hat\Pi}^t$,
 where $\pi$ is the unique member in $\Xi$.

 Let $n \in \mathbb N$ such that $2 \leq n \leq \card{\Upsilon}$.
 Suppose that
 for any $\Xi \subset \Upsilon$ such that
 $\card{\Xi} = n - 1$,
 $\bigcap \Xi \in \hat{\hat\Pi}^t$
 (induction hypothesis).
 Let $\Xi \in \Upsilon$ such that $\card{\Xi} = n$.
 There exists
 $\pi \in \Xi$.
 Note that
 by the induction hypothesis,
 $\bigcap \rnd{\Xi \setminus \crly{\pi}} \in \Pi^t$.
 For any set $S$,
 \begin{align*}
  \rnd{S \in \bigcap \Xi}&
  \leftrightarrow
  \rnd{\exists \tilde S \in C_\Xi}
  \rnd{
   S
   =
   \bigcap \tilde S\rnd{\Xi}
   \neq
   \emptyset
  }\\&
  \leftrightarrow
  \rnd{\exists \tilde R \in C_{\Xi \setminus \crly{\pi}}}
  \rnd{\exists Q \in \pi}
  \rnd
  {
   S
   =
   \rnd{\bigcap \tilde R\rnd{\Xi \setminus \crly{\pi}}} \cap Q
   \neq
   \emptyset
  }\\&
  \leftrightarrow
  \rnd{\exists R \in\bigcap \rnd{\Xi \setminus \crly{\pi}}}
  \rnd{\exists Q \in \pi}
  \rnd
  {
   S
   =
   R \cap Q
   \neq \emptyset
  }\\&
  \leftrightarrow
  \rnd{S \in \rnd{\bigcap \rnd{\Xi \setminus \crly{\pi}}} \cap \pi}.
 \end{align*}
 Thus,
 $\bigcap \Xi = \rnd{\bigcap \rnd{\Xi \setminus \crly{\pi}}} \cap \pi$.
 Hence,
 by the induction hypothesis and Lemma \ref{lem:intersection_of_two_partitions},
 $\bigcap \Xi \in \hat{\hat\Pi}^t$.
\end{proof}

Let
$i \in N$,
$\sigma_i \in \tilde\Sigma_i$,
$t \in T$ such that $T_{> t} \neq \emptyset$
and
$S \in \mathcal T^t$.
Then,
Lemmas \ref{lem:extension_of_consistency} and \ref{lem:initial_uniqueness} are shown.

\begin{lemma}\label{lem:extension_of_consistency}
 Let $h \in \ch_i^S\rnd{\sigma_i}$.
 Then,
 there exists $g \in \ch_i^t\rnd{\sigma_i}$ such that
 $g^t = h^t$,
 and
 $g_i^S = h_i^S$.
\end{lemma}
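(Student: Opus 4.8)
The plan is to invoke traceability (Axiom~\ref{ax:traceability}) exactly once, at the supremum of $S$, and then verify the three conclusions directly. The point that makes a single application enough is that $\sigma_i^r(\cdot)$ depends only on the restriction of a history to $T_{<r}$; consequently, any history agreeing with $h$ everywhere strictly below $\sup S$ automatically inherits the $\sigma_i$-consistency of $h$ throughout $S$, so traceability is needed only to supply a $\sigma_i$-consistent continuation from $\sup S$ onward.

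First I would dispose of the case $S = T^t$: here $h$ is itself $t$-consistent with $\sigma_i$ for $i$, since $\ch_i^{T^t}(\sigma_i) = \ch_i^t(\sigma_i)$, so $g := h$ works. Otherwise $S \subsetneq T^t$, and I would pick any $r_0 \in T^t \setminus S$. Because $S$ is order-convex (being a connected subset of a linearly ordered space, the fact already used in the proof of Lemma~\ref{lem:totally_orderedness}) and contains $t = \min T^t$, the element $r_0$ must be an upper bound of $S$: if some $s \in S$ had $s > r_0 \geq t$, convexity would force $r_0 \in [t,s] \subseteq S$, a contradiction. Hence $S$ is bounded above, $s^* := \sup S$ exists in $T$, and $t \leq s^* \leq r_0$, so $s^* \in T^t$.

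Next I would apply Axiom~\ref{ax:traceability} to the time $s^*$ and the history $h$ to obtain some $g \in \sh_i^{s^*}(h) \cap \ch_i^{s^*}(\sigma_i)$; thus $g$ and $h$ agree on $T_{<s^*}$, and $g$ is $s^*$-consistent with $\sigma_i$ for $i$. The equality $g^t = h^t$ is then immediate from $t \leq s^*$. For $g \in \ch_i^t(\sigma_i)$: if $r \geq s^*$ this is the $s^*$-consistency of $g$; if $r \in T_{\geq t}$ with $r < s^*$, then, since $r < \sup S$, there is $s \in S$ with $s > r$, so order-convexity yields $r \in [t,s] \subseteq S$, whence $g_i(r) = h_i(r) = \sigma_i^r(h) = \sigma_i^r(g)$, the last equality because $g^r = h^r$. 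Finally, for $g_i^S = h_i^S$: every $r \in S$ satisfies $r \leq s^*$; if $r < s^*$ then $g_i(r) = h_i(r)$ by the agreement on $T_{<s^*}$, and if $s^* \in S$ then $g_i(s^*) = \sigma_i^{s^*}(g) = \sigma_i^{s^*}(h) = h_i(s^*)$, using $g^{s^*} = h^{s^*}$ and $h \in \ch_i^S(\sigma_i)$.

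The only genuine difficulty is bookkeeping: handling uniformly the possible shapes of $S$ (cofinal in $T^t$, or bounded above; containing its supremum, or not) and ensuring that the freshly built $g$ nowhere contradicts $\sigma_i$ on $S$, even though traceability constrains $g$ only from $s^*$ onward. That concern evaporates because $\sigma_i$ depends only on the strict past and $S$ is order-convex, so no transfinite recursion and no repeated use of traceability along $S$ is needed; in particular, Axioms~\ref{ax:well-orderedness} and \ref{ax:initial_uniqueness} play no role here.
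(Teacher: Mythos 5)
Your proposal is correct and follows essentially the same route as the paper: dispose of the trivial case where $S$ exhausts $T^t$, otherwise invoke Axiom~\ref{ax:traceability} once at $s^\ast=\sup S$ (which exists because order-convexity of $S$ forces any point of $T^t\setminus S$ to be an upper bound) and check the three conclusions using that $\sigma_i^r$ depends only on the strict past. The only cosmetic difference is that where you verify $g_i(s^\ast)=h_i(s^\ast)$ directly, the paper cites Lemma~\ref{lem:increment_of_uniqueness}, whose proof is exactly your inline computation.
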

\begin{proof}
 Consider the case where
 $S$ is not bounded from above.
 Then,
 $S = T_{\geq t}$.
 Thus,
 $h \in \ch_i^t\rnd{\sigma_i}$.

 Consider the case where
 $S$ is bounded from above.
 Let $s := \sup S$.
 By Axiom \ref{ax:traceability},
 there exists
 $g \in \sh_i^s\rnd{h} \cap \ch_i^s\rnd{\sigma_i}$.
 By the following,
 $g \in \ch_i^t\rnd{\sigma_i}$,
 $g^t = h^t$,
 and
 $g_i^S = h_i^S$.
 \begin{itemize}
  \item
  For any $r \in S \setminus \crly{s}$,
  because
  $h \in \ch_i^S\rnd{\sigma_i}$,
  and
  $g \in \sh_i^s\rnd{h}$,
  $
   g_i\rnd{r}
   =
   h_i\rnd{r}
   =
   \sigma_i^r\rnd{h}
   =
   \sigma_i^r\rnd{g}
  $.
  Note that
  $g \in \ch_i^s\rnd{\sigma_i}$.
  Then,
  $g \in \ch_i^t\rnd{\sigma_i}$.
  \item
  Because
  $g \in \sh_i^s\rnd{h}$,
  and
  $s \geq t$,
  $g^t = h^t$.
  \item
  Because
  $g \in \ch_i^t\rnd{\sigma_i}$,
  and
  $h \in \ch_i^S\rnd{\sigma_i}$,
  $g,h \in \ch_i^{\sqr{t,s} \cap S}\rnd{\sigma_i}$;
  because $g^s = h^s$,
  $g^{\sqrrnd{t,s} \cap S} = f^{\sqrrnd{t,s} \cap S}$.
  Thus,
  by Lemma \ref{lem:increment_of_uniqueness},
  $g_i^{\sqr{t,s} \cap S} = h_i^{\sqr{t,s} \cap S}$.
  Note that
  $S \subset \sqr{t,s}$.
  Then,
  $g_i^S = h_i^S$.
 \end{itemize}
\end{proof}

\begin{lemma}\label{lem:initial_uniqueness}
 Let $h,g \in \ch_i^S\rnd{\sigma_i}$ such that
 $h^t = g^t$.
 Then,
 there exists $s \in T_{> t}$ such that
 $h_i^{\sqrrnd{t,s} \cap S} = g_i^{\sqrrnd{t,s} \cap S}$.
\end{lemma}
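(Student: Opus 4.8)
The plan is to reduce the statement to Axiom \ref{ax:initial_uniqueness} by promoting the two $S$-consistent histories $h$ and $g$ to genuinely $t$-consistent ones, using Lemma \ref{lem:extension_of_consistency}.

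First I would apply Lemma \ref{lem:extension_of_consistency} to $h$: since $h \in \ch_i^S\rnd{\sigma_i}$ and $\sigma_i \in \tilde\Sigma_i$, there is $\hat h \in \ch_i^t\rnd{\sigma_i}$ with $\hat h^t = h^t$ and $\hat h_i^S = h_i^S$. Doing the same for $g$ produces $\hat g \in \ch_i^t\rnd{\sigma_i}$ with $\hat g^t = g^t$ and $\hat g_i^S = g_i^S$. From $h^t = g^t$ I then get $\hat h^t = \hat g^t$, so $\hat h$ and $\hat g$ are two complete histories $t$-consistent with $\sigma_i$ for $i$ that agree on $T_{< t}$.

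Next, because $T_{> t} \neq \emptyset$, Axiom \ref{ax:initial_uniqueness} applies to $\hat h$ and $\hat g$ and yields some $s \in T_{> t}$ with $\hat h_i^{\sqrrnd{t,s}} = \hat g_i^{\sqrrnd{t,s}}$. Restricting this equality to $\sqrrnd{t,s} \cap S \subset S$ and using $\hat h_i^S = h_i^S$ and $\hat g_i^S = g_i^S$, I conclude $h_i^{\sqrrnd{t,s} \cap S} = g_i^{\sqrrnd{t,s} \cap S}$, which is exactly the claim.

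The argument is essentially a two-step combination of an earlier lemma and an axiom, so there is no serious obstacle; the only point requiring a little care is checking that the extension step leaves $h$ and $g$ unchanged both on $S$ (so that the final equality can be transferred back from $\hat h,\hat g$) and on $T_{< t}$ (so that the hypothesis $\hat h^t = \hat g^t$ of Axiom \ref{ax:initial_uniqueness} holds) — and both of these are precisely what Lemma \ref{lem:extension_of_consistency} guarantees. One could also dispatch the degenerate case $S = \crly{t}$ directly, since there $h_i\rnd{t} = \sigma_i^t\rnd{h} = \sigma_i^t\rnd{g} = g_i\rnd{t}$ follows at once from $h^t = g^t$, but the general argument already subsumes it.
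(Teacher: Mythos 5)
Your proposal is correct and follows essentially the same route as the paper: extend $h$ and $g$ to $t$-consistent histories via Lemma \ref{lem:extension_of_consistency}, apply Axiom \ref{ax:initial_uniqueness} to the extensions, and transfer the resulting agreement back to $h$ and $g$ by restricting to $\sqrrnd{t,s} \cap S$. No gaps.
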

\begin{proof}
 By Lemma \ref{lem:extension_of_consistency},
 for any $f \in \crly{h,g}$,
 there exists $f' \in \ch_i^t\rnd{\sigma_i}$ such that
 $\rnd{f'}^t = f^t$,
 and
 $\rnd{f'}_i^S = f_i^S$.
 Note that
 $
  \rnd{h'}^t
  =
  h^t
  =
  g^t
  =
  \rnd{g'}^t
 $.
 Then,
 by Axiom \ref{ax:initial_uniqueness},
 there exists $s \in T_{> t}$ such that
 $\rnd{h'}_i^{\sqrrnd{t,s}} = \rnd{g'}_i^{\sqrrnd{t,s}}$.
 Because $\rnd{h'}_i^{\sqrrnd{t,s}} = \rnd{g'}_i^{\sqrrnd{t,s}}$,
 $\rnd{h'}_i^{\sqrrnd{t,s} \cap S} = \rnd{g'}_i^{\sqrrnd{t,s} \cap S}$;
 for any $f \in \crly{h,g}$,
 because $\rnd{f'}_i^S = f_i^S$,
 $\rnd{f'}_i^{\sqrrnd{t,s} \cap S} = f_i^{\sqrrnd{t,s} \cap S}$.
 Thus,
 $h_i^{\sqrrnd{t,s}\cap S} = g_i^{\sqrrnd{t,s} \cap S}$.
\end{proof}

\begin{lemma}\label{lem:part_uniqueness}
 Let
 $t \in T$,
 $S \in \mathcal T^t$,
 $h \in H$,
 $\sigma \in \tilde\Sigma$
 and
 $g,f \in \sh^t\rnd{h} \cap \ch^S\rnd{\sigma}$.
 Then,
 $g^S = f^S$.
\end{lemma}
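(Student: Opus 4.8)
The plan is to run a "forward propagation of agreement" argument along $S$. Since $S\in\mathcal T^t$ is connected it is an interval in $\rnd{T^t,\leq}$ with minimum $t$, so I would set
$\mathcal A:=\set{r\in S}{g^{\sqr{t,r}\cap S}=f^{\sqr{t,r}\cap S}}$
and try to show $\mathcal A=S$; this immediately gives $g^S=f^S$ (take $r=s$ for each $s\in S$). First, $t\in\mathcal A$: from $g,f\in\sh^t\rnd h$ we have $g^t=h^t=f^t$, and since $\sigma^t$ depends only on the history before $t$ and $t\in S$ with $g,f\in\ch^S\rnd\sigma$, we get $g\rnd t=\sigma^t\rnd g=\sigma^t\rnd f=f\rnd t$. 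Next, $\mathcal A$ is trivially downward closed in $S$, and if $\mathcal A\neq S$ then $\mathcal A$ is bounded above (any element of $S\setminus\mathcal A$ is an upper bound, by downward closedness), so $m:=\sup\mathcal A$ exists by order-completeness of $T$. The remainder is a proof by contradiction from $\mathcal A\neq S$, organized by the position of $m$; note also that whenever $\sup\mathcal A=m$, every $s\in S$ with $s<m$ lies in $\mathcal A$.

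Case $m\notin\mathcal A$. Here $t<m$ (else $t=m\in\mathcal A$). I would first show $m\in S$: if $m\notin S$, then by the cofinality remark $\mathcal A=\set{s\in S}{s<m}$ and $S\setminus\mathcal A=\set{s\in S}{s>m}$, both nonempty (the former contains $t$, the latter a witness of $\mathcal A\neq S$), contradicting that $S$ is an interval. With $m\in S$ we have $\sqr{t,m}\subset S$, so $\bigcup_{s\in S,\,s<m}\sqr{t,s}=\sqrrnd{t,m}\cap S$ and $g,f$ agree there; since $g,f\in\ch^{\sqr{t,m}\cap S}\rnd\sigma$, Lemma \ref{lem:increment_of_uniqueness'} with $s:=m$ gives $g^{\sqr{t,m}\cap S}=f^{\sqr{t,m}\cap S}$, i.e.\ $m\in\mathcal A$, a contradiction.

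Case $m\in\mathcal A$. Pick $d\in S\setminus\mathcal A$; downward closedness forces $d>m$, so $S\cap T_{>m}\neq\emptyset$ and $T_{>m}\neq\emptyset$. Put $p:=\inf\rnd{S\cap T_{>m}}\geq m$. If $p>m$, then since $S$ is an interval $p\in S$ and $\rnd{m,p}$ is empty, so $\sqrrnd{t,p}\cap S=\sqr{t,m}$, on which $g,f$ agree ($m\in\mathcal A$); Lemma \ref{lem:increment_of_uniqueness'} with $s:=p$ then gives $p\in\mathcal A$, contradicting $p>m$. If $p=m$, I would restart the initial-uniqueness machinery from the base point $m$: $S\cap T^m$ is again a connected set containing $m$ — a short point-set argument shows a tail of a connected LOTS is connected — hence $S\cap T^m\in\mathcal T^m$; moreover $g,f\in\ch_i^{S\cap T^m}\rnd{\sigma_i}$ for each $i$, and $g^m=f^m$ since $g,f$ agree below $t$ and on $\sqr{t,m}$, hence on $T_{<m}$. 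Applying Lemma \ref{lem:initial_uniqueness} at $\rnd{m,S\cap T^m}$ for each $i\in N$ yields $u_i\in T_{>m}$ with $g_i=f_i$ on $\sqrrnd{m,u_i}\cap S$; taking $u:=\min_{i\in N}u_i\in T_{>m}$ (here $N$ finite is used), $g$ and $f$ agree on $\sqrrnd{m,u}\cap S$. Since $u>m=\inf\rnd{S\cap T_{>m}}$, there is $d'\in S\cap T_{>m}$ with $d'<u$, whence $g,f$ agree on $\sqr{t,m}\cup\sqr{m,d'}=\sqr{t,d'}$ and $d'\in\mathcal A$, contradicting $d'>m=\sup\mathcal A$. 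Therefore $\mathcal A=S$.

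The main obstacle is the limit case $p=m$: one cannot conclude directly and must reboot the argument from the interior point $m$, which requires (i) verifying that $S\cap T^m$ is still a connected set so that Axiom \ref{ax:initial_uniqueness} applies there via Lemma \ref{lem:initial_uniqueness}, and (ii) amalgamating the finitely many player-specific interval lengths $u_i$ into a single $u>m$. It is worth noting that this lemma draws only on traceability and initial uniqueness (through Lemmas \ref{lem:increment_of_uniqueness'} and \ref{lem:initial_uniqueness}) together with the supremum/infimum completeness of $\rnd{T,\leq}$; well-orderedness (Axiom \ref{ax:well-orderedness}) is not needed here and enters only afterwards, when the pieces are glued along the common refinement of the players' partitions.
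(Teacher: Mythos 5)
Your proof is correct, but it takes a genuinely different route from the paper's. The paper proves this lemma by forming the common refinement $\bigcap\set{\pi_i^t\rnd{e}}{i \in N \wedge e \in \crly{g,f}}$ of the action-constancy partitions, invoking Axiom \ref{ax:well-orderedness} together with Lemmas \ref{lem:intersection_of_two_partitions} and \ref{lem:intersection_of_finite_partitions} to make it a well-ordered partition of $S$, and then running a transfinite induction over its cells: agreement at one point of a cell propagates to the whole cell because each $e_i$ is constant there. You instead run a continuous ("real") induction directly on $S$ via $m = \sup \mathcal A$, using only the order-completeness of $\rnd{T,\leq}$, Lemma \ref{lem:increment_of_uniqueness'} to close the left-limit case $m \notin \mathcal A$, and Lemma \ref{lem:initial_uniqueness} at the interior base point $m$ to push past the right-limit case $p = m$; I checked the case analysis and it is sound. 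Two small points deserve to be written out rather than asserted: that $S \cap T^m$ is connected \emph{in the order topology of $\rnd{T^m,\leq}$} (a tail of a connected set is connected, and convexity of $T^m$ in $T^t$ makes the subspace and order topologies agree, so membership in $\mathcal T^m$ follows), and that $p \in S$ in the subcase $p > m$ uses $p \leq d$ together with convexity. What your approach buys is significant: it avoids the entire partition machinery and, as you observe, never invokes Axiom \ref{ax:well-orderedness}. Since Lemma \ref{lem:part_uniqueness} is the only point at which the paper's proof of Theorem \ref{thm:unique_existence_of_consistent_history} uses that axiom, your argument would render Axiom \ref{ax:well-orderedness} redundant for the main theorem (Axioms \ref{ax:traceability} and \ref{ax:initial_uniqueness} sufficing) --- a consequence strong enough that it should be stated explicitly and checked against the author's intent, but I see no gap in it. The paper's approach, by contrast, exposes structural information about consistent histories (the well-ordered pattern of action changes) that your argument does not need and does not deliver.
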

\begin{proof}
 By Lemma \ref{lem:totally_ordered_partition},
 for
 any $i \in N$
 and
 any $e \in \crly{g,f}$,
 $\pi_i^t\rnd{e} \in \Pi^t$.
 Let
 $
  \rho
  :=
  \bigcap
  \set
  {
   \pi_i^t\rnd{e}
  }
  {
   i \in N
   \wedge
   e \in \crly{g,f}
  }
 $.
 By
 Lemma \ref{lem:totally_ordered_partition},
 Axiom \ref{ax:well-orderedness}
 and
 Lemma \ref{lem:intersection_of_finite_partitions},
 $\rho \in \hat{\hat\Pi}^t$.
 Let $\tau := \set{R \cap S}{R \in \pi}$.
 Then,
 $\tau \in \hat{\hat\Pi}^S$,
 i.e.,
 $\rnd{\tau,\leq}$ is a well-ordered set.
 By transfinite induction,
 show that
 for any $R \in \tau$,
 $g^R = f^R$.
 Let $R \in \tau$.
 Suppose that
 for any $Q \in \tau$ with $Q < R$,
 $g^Q = f^Q$
 (induction hypothesis).
 Let $s := \inf R$.
 Because $t$ is a lower bound of $R$,
 $s \geq t$;
 there exists $r \in R$,
 and
 $s \leq r$.
 Thus,
 $s \in \sqr{t,r}$.
 Note that
 $t,r \in S$.
 Note also that
 $S$ is a connected set in $T^t$ equipped with the order topology on $\rnd{T^t,\leq}$,
 and
 thus,
 $S$ is an interval in $\rnd{T^t,\leq}$.
 Then,
 $s \in S$.

 Consider the case where
 $T_{> s} = \emptyset$.
 For any $e \in \crly{g,f}$,
 because
 $e \in \ch^S\rnd{\sigma}$,
 and
 $s \in S$,
 $e\rnd{s} = \sigma^s\rnd{e}$;
 because
 by the induction hypothesis,
 $g^s = f^s$,
 $\sigma^s\rnd{g} = \sigma^s\rnd{f}$.
 Hence,
 $g\rnd{s} = f\rnd{s}$.
 Thus,
 $g^{\crly{s}} = f^{\crly{s}}$.
 Note that
 because $T_{> s} = \emptyset$,
 $R = \crly{s}$.
 Then,
 $g^R = f^R$.

 Consider the case where
 $T_{> s} \neq \emptyset$.
 Let $Q := \crly{s} \cup R$.
 $Q \in \mathcal T^s$;
 because
 $g,f \in \ch^S\rnd{\sigma}$,
 and
 $Q = \crly{s} \cup R \subset S$,
 for any $i \in N$,
 $g,f \in \ch_i^Q\rnd{\sigma_i}$;
 by the induction hypothesis,
 $g^s = f^s$.
 Thus,
 by Lemma \ref{lem:initial_uniqueness},
 for any $i \in N$,
 there exists $r_i \in T_{> s}$ such that
 $g_i^{\sqrrnd{s,r_i} \cap Q} = f_i^{\sqrrnd{s,r_i} \cap Q}$.
 Let $r := \min_{i \in N} r_i$.
 Then,
 $g^{\sqrrnd{s,r} \cap Q} = f^{\sqrrnd{s,r} \cap Q}$.
 By Lemma \ref{lem:increment_of_uniqueness'},
 $g^{\sqr{s,r} \cap Q} = f^{\sqr{s,r} \cap Q}$.
 There exists $q \in \sqr{s,r} \cap R$.
 Because $q \in \sqr{s,r} \cap R \subset \sqr{s,r} \cap Q$,
 $g\rnd{q} = f\rnd{q}$;
 for
 any $i \in N$
 and
 any $e \in \crly{g,f}$,
 because there exists $P \in \pi_i^t\rnd{e}$ such that $P \supset R$,
 for any $p \in R$,
 $e_i\rnd{p} = e_i\rnd{q}$.
 Thus,
 for any $p \in R$,
 $g\rnd{p} = f\rnd{p}$,
 i.e.,
 $g^R = f^R$.

 Thus,
 for any $R \in \tau$,
 $g^R = f^R$.
 Hence,
 $g^S = f^S$.
\end{proof}

Let
$t \in T$,
$h \in H$
and
$\sigma \in \tilde \Sigma$.

\paragraph{Existence}

Let
$
 S
 :=
 \set
 {
  s \in T_{\geq t}
 }
 {
  \sh^t\rnd{h} \cap \ch^{\sqr{t,s}}\rnd{\sigma}
  \neq
  \emptyset
 }
$.
For any $s \in S$,
there exists $g_s \in \sh^t\rnd{h} \cap \ch^{\sqr{t,s}}\rnd{\sigma}$.
There exists $g \in \sh^t\rnd{h}$ such that
for any $s \in S$,
$g\rnd{s} = g_s\rnd{s}$.
Let $s \in S$.
Let $r \in \sqrrnd{t,s}$.
Then,
$g_s \in \sh^t\rnd{h} \cap \ch^{\sqr{t,r}}\rnd{\sigma}$.
Thus,
Lemma \ref{lem:part_uniqueness},
$g_r\rnd{r} = g_s\rnd{r}$.
Note that
$g_r\rnd{r} = g\rnd{r}$.
Then,
$g\rnd{r} = g_s\rnd{r}$.
Thus,
$g^s = \rnd{g_s}^s$.
Hence,
$
 g\rnd{s}
 =
 g_s\rnd{s}
 =
 \sigma^s\rnd{g_s}
 =
 \sigma^s\rnd{g}
$.
Thus,
$g \in \sh^t\rnd{h} \cap \sh^S\rnd{\sigma}$.
Hence,
it suffices to show that
$S = T_{\geq t}$.
Suppose that
$S \neq T_{\geq t}$
(assumption for contradiction).
By the following,
$S$ is nonempty and bounded from above.
\begin{itemize}
 \item
 There exists $f \in \sh^t\rnd{h}$ such that
 $f\rnd{t} = \sigma^t\rnd{h}$.
 Because $f^t = h^t$,
 $
  f\rnd{t}
  =
  \sigma^t\rnd{h}
  =
  \sigma^t\rnd{f}
 $.
 Thus,
 $f \in \ch^{\sqr{t,t}}\rnd{\sigma}$.
 Hence,
 $t \in S$.
 Thus,
 $S \neq \emptyset$.
 \item
 There exists $s \in T_{\geq t} \setminus S$.
 Let $r \in S$.
 Suppose that
 $s \leq r$
 (assumption for contradiction).
 There exists $f \in \sh^t\rnd{h} \cap \ch^{\sqr{t,r}}\rnd{\sigma}$.
 Because $s \leq r$,
 $f \in \sh^t\rnd{h} \cap \ch^{\sqr{t,s}}\rnd{\sigma}$.
 Thus,
 $\sh^t\rnd{h} \cap \ch^{\sqr{t,s}}\rnd{\sigma} \neq \emptyset$.
 Hence,
 $s \in S$,
 which contradicts that $s \notin S$.
 Thus,
 $r < s$.
 Thus,
 $S$ is bounded from above.
\end{itemize}
Let $s := \sup S$.
Let $r \in \sqrrnd{t,s}$.
Because $s = \sup S$,
there exists $q \in \rnd{r,s} \cap S$.
$\sh^t\rnd{h} \cap \ch^{\sqr{t,q}}\rnd{\sigma} \neq \emptyset$.
Note that
because $r < q$,
$\ch^{\sqr{t,r}}\rnd{\sigma} \supset \ch^{\sqr{t,q}}\rnd{\sigma}$.
Then,
$\sh^t\rnd{h} \cap \ch^{\sqr{t,r}}\rnd{\sigma} \neq \emptyset$.
Thus,
$r \in S$.
Hence,
$\sqrrnd{t,s} \subset S$.

Consider the case where
$s \in S$.
By Axiom \ref{ax:traceability},
for any $i \in N$,
there exists $g^i \in \sh_i^s\rnd{g} \cap \ch_i^s\rnd{\sigma_i}$.
There exists $f \in \sh^s\rnd{g}$ such that
for any $i \in N$,
$f_i = \rnd{g^i}_i$.
By Axiom \ref{ax:traceability},
for any $i \in N$,
there exists $f^i \in \sh_i^s\rnd{f} \cap \ch_i^s\rnd{\sigma_i}$.
Thus,
by Axiom \ref{ax:initial_uniqueness},
for any $i \in N$,
there exists $r_i \in T_{> s}$ such that
$\rnd{g^i}_i^{\sqrrnd{s,r_i}} = \rnd{f^i}_i^{\sqrrnd{s,r_i}}$.
Let $r := \min_{i \in N} r_i$.
Then,
for
any $i \in N$
and
any $q \in \sqrrnd{s,r}$,
$
 f_i\rnd{q}
 =
 \rnd{g^i}_i\rnd{q}
 =
 \rnd{f^i}_i\rnd{q}
$,
and
$f_{-i}\rnd{q} = \rnd{f^i}_{-i}\rnd{q}$;
thus,
$f\rnd{q} = f^i\rnd{q}$.
Hence,
for
any $i \in N$
and
any $q \in \sqrrnd{s,r}$,
$
 f_i\rnd{q}
 =
 \rnd{f^i}_i\rnd{q}
 =
 \sigma_i^q\rnd{f^i}
 =
 \sigma_i^q\rnd{f}
$.
Note that
because
$f^s = g^s$,
and
$
 g
 \in
 \sh^t\rnd{h} \cap \ch^S\rnd{\sigma}
 \subset
 \sh^t\rnd{h} \cap \ch^{\sqrrnd{t,s}}\rnd{\sigma}
$,
$
 f^t
 =
 g^t
 =
 h^t
$,
and
for any $q \in \sqrrnd{t,s}$,
$
 f\rnd{q}
 =
 g\rnd{q}
 =
 \sigma^q\rnd{g}
 =
 \sigma^q\rnd{f}
$.
Then,
$f \in \sh^t\rnd{h} \cap \ch^{\sqrrnd{t,r}}\rnd{\sigma}$.
Thus,
by Lemma \ref{lem:increment_of_existence'},
there exists $\sh^t\rnd{h} \cap \ch^{\sqr{t,r}}\rnd{\sigma} \neq \emptyset$.
Hence,
$r \in S$,
which contradicts that $r > s$.

Consider the case where
$s \notin S$.
$
 g
 \in
 \sh^t\rnd{h} \cap \ch^S\rnd{\sigma}
 \subset
 \sh^t\rnd{h} \cap \ch^{\sqrrnd{t,s}}\rnd{\sigma}
$.
Thus,
by Lemma \ref{lem:increment_of_existence'},
$\sh^t\rnd{h} \cap \ch^{\sqr{t,s}}\rnd{\sigma} \neq \emptyset$.
Hence,
$s \in S$,
which contradicts that $s \notin S$.

\paragraph{Uniqueness}

Let $g,f \in \sh^t\rnd{h} \cap \ch^t\rnd{\sigma}$.
By Lemma \ref{lem:part_uniqueness},
because $g,f \in \sh^t\rnd{h} \cap \ch^{T_{\geq t}}\rnd{\sigma}$,
$g^{T_{\geq t}} = f^{T_{\geq t}}$.
Thus,
because $g^t = h^t = f^t$,
$g = f$.
\qed

\section{Proof of Proposition \ref{prop:well-ordered_time}}\label{sec:proof_of_well-ordered_time}

\paragraph{Satisfaction of Axiom \ref{ax:traceability}}

Let
$t \in T$
and
$h \in H$.
Because $\rnd{T_{\geq t},\leq}$ is a well-ordered set,
by transfinite induction,
define $g \in \sh_i^t\rnd{h}$ as
for any $s \in T_{\geq t}$,
$g_i\rnd{s} = \sigma_i^s\rnd{f}$,
where $f \in \sh_i^t\rnd{h}$ such that
for any $r \in \sqrrnd{t,s}$,
$f_i\rnd{r} = g_i\rnd{r}$.
By the definition of $g$,
$g \in \sh_i^t\rnd{h} \cap \ch_i^t\rnd{\sigma_i}$.

\paragraph{Satisfaction of Axiom \ref{ax:well-orderedness}}
and 
Let
$t \in T$
and
$h \in \ch_i^t\rnd{\sigma_i}$.
Let $\rho \subset 2^{\pi_i^t\rnd{h}} \setminus \crly{\emptyset}$.
Because $\rnd{T_{\geq t},\leq}$ is a well-ordered set,
there exists a minimum $s$ of $\bigcup \rho$.
There exists $S \in \rho$ such that $S \ni s$.
Let $R \in \rho \setminus \crly{S}$.
There exists $r \in R$.
Because
$s$ is a minimum of $\bigcup \rho$,
and
$s \neq r$,
$s < r$.
Note that
by Lemma \ref{lem:totally_ordered_partition},
$S < R$,
or
$R < S$.
Then,
$S < R$.
Thus,
$S$ is a minimum of $\rho$.
Thus,
$\rnd{\pi_i^t\rnd{h},\leq}$ is a well-ordered set.

\paragraph{Satisfaction of Axiom \ref{ax:initial_uniqueness}}

Let
$t \in T$ such that
$T_{> t} \neq \emptyset$
and
$h,g \in \ch_i^t\rnd{\sigma_i}$ such that
$h^t = g^t$.
Because $\sqrrnd{t,t} = \emptyset$,
$h^{\sqrrnd{t,t}} = g^{\sqrrnd{t,t}}$.
Thus,
by Lemma \ref{lem:increment_of_uniqueness},
$h_i^{\sqr{t,t}} = g_i^{\sqr{t,t}}$.
Because $\rnd{T_{\geq t},\leq}$ is a well-ordered set,
There exists a minimum $s$ of $T_{> t}$.
Because $\sqrrnd{t,s} = \crly{t} = \sqr{t,t}$,
$h_i^{\sqrrnd{t,s}} = g_i^{\sqrrnd{t,s}}$.
\qed

\section{Proof of Proposition \ref{prop:inertiality}}\label{sec:proof_of_inertiality}

Suppose that
$\sigma_i$ satisfies Axiom \ref{ax:inertiality}.

\paragraph{Satisfaction of Axiom \ref{ax:traceability}}

Let
$t \in T$
and
$h \in H$.

Show that
\begin{align}\label{eq:part_uniqueness_in_inertiality}
 \rnd{\forall s \in T_{\geq t}}
 \rnd{\forall g,f \in \sh_i^t\rnd{h} \cap \ch_i^{\sqr{t,s}}\rnd{\sigma_i}}
 \rnd{g^{\sqr{t,s}} = f^{\sqr{t,s}}}.
\end{align}
Let
$s \in T_{\geq t}$
and
$g,f \in \sh_i^t\rnd{h} \cap \ch_i^{\sqr{t,s}}\rnd{\sigma_i}$.
Let $S := \set{r \in \sqr{t,s}}{g_i\rnd{r} \neq f_i\rnd{r}}$.
Suppose that
$S \neq \emptyset$
(assumption for contradiction).
Let $r := \inf S$.
\begin{itemize}
 \item
 Consider the case where
 $T_{> r} = \emptyset$.
 Then,
 $S = \crly{r}$,
 and
 $r = t$.
 Thus,
 because
 $g,f \in \sh_i^t\rnd{h} \cap \ch_i^{\sqr{t,s}}\rnd{\sigma_i}$,
 $
  g_i\rnd{r}
  =
  g_i\rnd{t}
  =
  \sigma_i^t\rnd{g}
  =
  \sigma_i^t\rnd{f}
  =
  f_i\rnd{t}
  =
  f_i\rnd{r}
 $.
 Thus,
 $r \notin S$,
 which contradicts that
 $r \in S$.
 \item
 Consider the case where
 $T_{> r} \neq \emptyset$.
 Because
 $r = \inf S$,
 and
 $g,f \in \sh_i^t\rnd{h}$,
 $g_i^r = f_i^r$;
 because
 $g,f \in \sh_i^t\rnd{h}$,
 $
  g_{-i}
  =
  h_{-i}
  =
  f_{-i}
 $.
 Thus,
 $g^r = f^r$.
 Hence,
 by Axiom \ref{ax:inertiality},
 there exist $q \in T_{> r}$ and $a_i \in A_i$ such that
 for any $p \in \sqrrnd{r,q}$,
 $
  \sigma_i^p\rnd{g}
  =
  a_i
  =
  \sigma_i^p\rnd{f}
 $.
 \begin{itemize}
  \item
  Consider the subcase where
  $q \leq s$.
  Because $g,f \in \ch_i^{\sqr{t,s}}\rnd{\sigma_i}$,
  for any $p \in \sqrrnd{r,q}$,
  $
   g_i\rnd{p}
   =
   \sigma_i^p\rnd{g}
   =
   \sigma_i^p\rnd{f}
   =
   f_i\rnd{p}
  $.
  Thus,
  $\sqrrnd{r,q} \cap S = \emptyset$,
  which contradicts that
  $r = \inf S$.
  \item
  Consider the subcase where
  $q > s$.
  Because $g,f \in \ch_i^{\sqr{t,s}}\rnd{\sigma_i}$,
  for any $p \in \sqr{r,s}$,
  $
   g_i\rnd{p}
   =
   \sigma_i^p\rnd{g}
   =
   \sigma_i^p\rnd{f}
   =
   f_i\rnd{p}
  $.
  Thus,
  because $g_i^r = f_i^r$,
  $S = \emptyset$,
  which contradicts that
  $S \neq \emptyset$.
 \end{itemize}
\end{itemize}
Hence,
$S = \emptyset$.
Thus,
$g_i^{\sqr{t,s}} = f_i^{\sqr{t,s}}$.
Note that
$g,f \in \sh_i^t\rnd{h}$,
and
thus,
$
 g_{-i}
 =
 h_{-i}
 =
 f_{-i}
$.
Then,
$g^{\sqr{t,s}} = f^{\sqr{t,s}}$.

Let $S := \set{s \in T_{\geq t}}{\sh_i^t\rnd{h} \cap \ch_i^{\sqr{t,s}}\rnd{\sigma_i} \neq \emptyset}$.
For any $s \in T_{\geq t}$,
if there exists $r \in S$ such that
$r \geq s$,
there exists $g \in \sh_i^t\rnd{h} \cap \ch_i^{\sqr{t,r}}\rnd{\sigma_i}$;
$g \in \sh_i^t\rnd{h} \cap \ch_i^{\sqr{t,s}}\rnd{\sigma_i}$;
thus,
$\sh_i^t\rnd{h} \cap \ch_i^{\sqr{t,s}}\rnd{\sigma_i} \neq \emptyset$;
hence,
$s \in S$.
Thus,
\begin{align}\label{eq:existence_of_consistent_history}
 \rnd{\forall s \in T_{\geq t}}
 \rnd
 {
  \rnd
  {
   \rnd{\exists r \in S}
   \rnd{r \geq s}
  }
  \rightarrow
  \rnd{s \in S}
 }.
\end{align}
For any $s \in S$,
there exists $g_s \in \sh_i^t\rnd{h} \cap \ch_i^{\sqr{t,s}}\rnd{\sigma_i}$.
There exists $g \in \sh_i^t\rnd{h}$ such that
for any $s \in S$,
$g\rnd{s} = g_s\rnd{s}$.
Let $s \in S$.
Let $r \in \sqrrnd{t,s}$.
By (\ref{eq:existence_of_consistent_history}),
$r \in S$.
$g_r,g_s \in \sh_i^t\rnd{h} \cap \ch_i^{\sqr{t,r}}\rnd{\sigma_i}$.
Thus,
by (\ref{eq:part_uniqueness_in_inertiality}),
$g_r\rnd{r} = g_s\rnd{r}$.
Note that
$g_r\rnd{r} = g\rnd{r}$.
Then,
$g\rnd{r} = g_s\rnd{r}$.
Thus,
$g^s = \rnd{g_s}^s$.
Hence,
$\sigma_i^s\rnd{g} = \sigma_i^s\rnd{g_s}$;
because $g\rnd{s} = g_s\rnd{s}$,
$g_i\rnd{s} = \rnd{g_s}_i\rnd{s}$;
because $g_s \in \ch_i^{\sqr{t,s}}\rnd{\sigma_i}$,
$\rnd{g_s}_i\rnd{s} = \sigma_i^s\rnd{g_s}$.
Thus,
$
 g_i\rnd{s}
 =
 \sigma_i^s\rnd{g}
$.
Thus,
\begin{align}\label{eq:consistency_of_diagonal_history}
 g \in \sh_i^t\rnd{h} \cap \ch_i^S\rnd{\sigma_i}.
\end{align}
It suffices to show that
$S = T_{\geq t}$.
Suppose that
$S \neq T_{\geq t}$
(assumption for contradiction).
By the following,
$S$ is nonempty and bounded from above.
\begin{itemize}
 \item
 There exists $f \in \sh_i^t\rnd{h}$ such that
 $f_i\rnd{t} = \sigma_i^t\rnd{h}$.
 Because $f^t = h^t$,
 $
  f_i\rnd{t}
  =
  \sigma_i^t\rnd{h}
  =
  \sigma_i^t\rnd{f}
 $.
 Thus, 
 $f \in \ch_i^{\sqr{t,t}}\rnd{\sigma_i}$.
 Hence,
 $\sh_i^t\rnd{h} \cap \ch_i^{\sqr{t,t}}\rnd{\sigma_i} \neq \emptyset$.
 Thus,
 $t \in S$.
 Hence,
 $S \neq \emptyset$.
 \item
 Because $S \neq T_{\geq t}$
 there exists $s \in T_{\geq t} \setminus S$.
 By (\ref{eq:existence_of_consistent_history}),
 for any $r \in S$,
 $r < s$.
 Thus,
 $S$ is bounded from above.
\end{itemize}
Let $s := \sup S$.
In the following two cases,
a contradiction occurs.
Thus,
$S = T_{\geq t}$.

Consider the case where
$T_{> s} = \emptyset$.
There exists $f \in \sh_i^s\rnd{g}$ such that
$f_i^s\rnd{s} = \sigma_i^s\rnd{g}$.
Let $r \in \sqrrnd{t,s}$.
Because $s = \sup S$,
there exists $q \in \rndsqr{r,s} \cap S$.
By (\ref{eq:existence_of_consistent_history}),
$r \in S$.
Thus,
by $f^s = g^s$ and (\ref{eq:consistency_of_diagonal_history}),
$
 f_i\rnd{r}
 =
 g_i\rnd{r}
 =
 \sigma_i^r\rnd{g}
 =
 \sigma_i^r\rnd{f}
$.
Because $f^s = g^s$,
$
 f_i\rnd{s}
 =
 \sigma_i^s\rnd{g}
 =
 \sigma_i^s\rnd{f}
$.
Thus,
$f \in \ch_i^{\sqr{t,s}}\rnd{\sigma_i}$.
Because
$f \in \sh_i^s\rnd{g}$,
and
$g \in \sh_i^t\rnd{h}$,
$f \in \sh_i^t\rnd{h}$.
Thus,
$\sh_i^t\rnd{h} \cap \ch_i^{\sqr{t,s}}\rnd{\sigma_i} \neq \emptyset$.
Thus,
$s \in S$.
For any $r \in T_{\geq t}$,
by $s \in S$, $s \geq r$ and (\ref{eq:existence_of_consistent_history}),
$r \in S$.
Thus,
$S = T_{\geq t}$,
which contradicts that
$S \neq T_{\geq t}$.

Consider the case where
$T_{> s} \neq \emptyset$.
By Axiom \ref{ax:inertiality},
there exist $r \in T_{> s}$ and $a_i \in A_i$ such that
\begin{align}\label{eq:inertiality}
 \rnd{\forall q \in \sqrrnd{s,r}}
 \rnd{\forall f \in \sh_i^s\rnd{g}}
 \rnd{\sigma_i^q\rnd{f} = a_i}.
\end{align}
Let $f \in \sh_i^s\rnd{g}$ such that
for any $q \in T_{\geq s}$,
$f_i\rnd{q} = a_i$.
For any $q \in \sqrrnd{t,s}$,
by the definition of $f$,
$f^s = g^s$;
by $q < \sup S$ and (\ref{eq:existence_of_consistent_history}),
$q \in S$;
by (\ref{eq:consistency_of_diagonal_history}),
$g \in \ch_i^S\rnd{\sigma_i}$;
by $f^s = g^s$ and $q < s$,
$f^q = g^q$;
thus,
$
 f_i\rnd{q}
 =
 g_i\rnd{q}
 =
 \sigma_i^q\rnd{g}
 =
 \sigma_i^q\rnd{f}
$.
For any $q \in \sqrrnd{s,r}$,
by the definition of $f$ and (\ref{eq:inertiality}),
$
 f_i\rnd{q}
 =
 a_i
 =
 \sigma_i^q\rnd{f}
$.
Thus,
$f \in \ch_i^{\sqrrnd{t,r}}\rnd{\sigma_i}$.
By the definition of $f$ and $t \leq s$,
$f \in \sh_i^t\rnd{g}$;
thus,
by $g \in \sh_i^t\rnd{h}$,
$f \in \sh_i^t\rnd{h}$.
Hence,
$f \in \sh_i^t\rnd{h} \cap \ch_i^{\sqrrnd{t,r}}\rnd{\sigma_i}$.
Thus,
by Lemma \ref{lem:increment_of_existence},
$\sh_i^t\rnd{h} \cap \ch_i^{\sqr{t,r}} \neq \emptyset$.
Hence,
$r \in S$.
Thus,
$r \leq s$,
which contradicts that
$r > s$.

\paragraph{Satisfaction of Axiom \ref{ax:well-orderedness}}

Let
$t \in T$
and
$h \in \ch_i^t\rnd{\sigma_i}$.
Let $\rho \in 2^{\pi_i^t\rnd{h}} \setminus \crly{\emptyset}$.
Let $s := \inf \bigcup \rho$.
In the following two cases,
there exists a minimum of $\rho$.
Thus,
$\rnd{\pi_i^t\rnd{h},\leq}$ is a well-ordered set.

Consider the case where
$T_{> s} = \emptyset$.
Then,
$\bigcup \rho = \crly{s}$.
Thus,
$\rho = \crly{\crly{s}}$.
Hence,
$\crly{s}$ is a minimum of $\rho$.

Consider the case where
$T_{> s} \neq \emptyset$.
By Axiom \ref{ax:inertiality},
there exist $r \in T_{> s}$ and $a_i \in A_i$ such that
\begin{align}
 \rnd{\forall q \in \sqrrnd{s,r}}
 \rnd
 {
  h_i\rnd{q}
  =
  \sigma_i^q\rnd{h}
  =
  a_i
 }.\label{eq:inertiality}
\end{align}
Because $s = \inf \bigcup \rho$,
there exists $q \in \sqrrnd{s,r} \cap \bigcup \rho$.
There exists $S \in \rho$ such that
$S \ni q$.
Suppose that
$\neg \rnd{\sqrrnd{s,r} \subset S}$
(assumption for contradiction).
Because
$S,\sqrrnd{s,r} \in \mathcal C^t$,
and
$S \cap \sqrrnd{s,r} \ni q$,
$S \cup \sqrrnd{s,r} \in \mathcal C^t$;
by the assumption for contradiction,
$S \cup \sqrrnd{s,r} \supsetneq S$;
thus,
because $S \in \pi_i^t\rnd{h}$,
$\card{h_i\rnd{S \cup \sqrrnd{s,r}}} \neq 1$.
Because
$S \in \pi_i^t\rnd{h}$,
and
$S \ni q$,
$h_i\rnd{S} = \crly{h_i\rnd{q}}$;
by (\ref{eq:inertiality}) and $\sqrrnd{s,r} \ni q$,
$h_i\rnd{\sqrrnd{s,r}} = \crly{h_i\rnd{q}}$;
thus,
$\card{h_i\rnd{S \cup \sqrrnd{s,r}}} = 1$,
which contradicts that
$\card{h_i\rnd{S \cup \sqrrnd{s,r}}} \neq 1$.
Thus,
$\sqrrnd{s,r} \subset S$.
Hence,
$s \in S$.
Let $R \in \rho \setminus \crly{S}$.
There exists $p \in R$.
Because
$s = \inf \bigcup \rho$,
and
$s \neq p$,
$s < p$.
Note that
by Lemma \ref{lem:totally_ordered_partition},
$S < R$,
or
$R < S$.
Then,
$S < R$.
Thus,
$S$ is a minimum of $\rho$.

\paragraph{Satisfaction of Axiom \ref{ax:initial_uniqueness}}

Let $t \in T$ such that
$T_{> t} \neq \emptyset$
and
$h,g \in \ch_i^t\rnd{\sigma_i}$ such that
$h^t = g^t$.
Then,
by Axiom \ref{ax:inertiality},
there exist $s \in T_{> t}$ and $a_i \in A_i$ such that
for any $r \in \sqrrnd{t,s}$,
$
 h_i\rnd{r}
 =
 \sigma_i^r\rnd{h}
 =
 a_i
 =
 \sigma_i^r\rnd{g}
 =
 g_i\rnd{r}
$.
Thus,
$h_i^{\sqrrnd{t,s}} = g_i^{\sqrrnd{t,s}}$.
\qed

\section{Proof of Proposition \ref{prop:frictionality}}\label{sec:proof_of_frictionality}

Suppose that
$\sigma_i$ satisfies Axiom \ref{ax:frictionality}.

\paragraph{Satisfaction of Axiom \ref{ax:well-orderedness}}

Let
$t \in T$
and
$h \in \ch_i^t\rnd{\sigma_i}$.
Let $\rho \in 2^{\pi_i^t\rnd{h}} \setminus \crly{\emptyset}$.
Let $s := \inf \bigcup \rho$.
In the following two cases,
there exists a minimum of $\rho$.
Thus,
$\rnd{\pi_i^t\rnd{h},\leq}$ is a well-ordered set.

Consider the case where
$s \in \bigcup \rho$.
There exists $S \in \rho$ such that $S \ni s$.
Let $R \in \rho \setminus \crly{S}$.
There exists $r \in R$.
Because
$s = \inf \bigcup \rho$,
and
$s \neq r$,
$s < r$.
Note that
by Lemma \ref{lem:totally_ordered_partition},
$S < R$,
or
$R < S$.
Then,
$S < R$.
Thus,
$S$ is a minimum of $\rho$.

Consider the case where
$s \notin \bigcup \rho$.
There exists $r \in \bigcup \rho$.
Because $s \notin \bigcup \rho$,
$r > s$.
Let $S := \set{q \in \rnd{s,r}}{h_i\rnd{q} \neq z_i}$.
If $S = \emptyset$,
$\card{h_i\rnd{\rnd{s,r}}} \leq 1$;
if $S \neq \emptyset$,
by Axiom \ref{ax:frictionality},
there exists a minimum $q$ of $S$,
and
$\card{h_i\rnd{\rnd{s,q}}} \leq 1$.
Thus,
there exists $r \in T_{> s}$ such that
$\card{h_i\rnd{\rnd{s,r}}} \leq 1$.
Because
$s = \inf \bigcup \rho$,
and
$s \notin \bigcup \rho$,
there exists $q \in \rnd{s,r} \cap \rnd{\bigcap \rho}$.
Thus,
$\rnd{s,r} \neq \emptyset$,
and
hence,
$\card{h_i\rnd{\rnd{s,r}}} = 1$.
There exists $S \in \rho$ such that
$S \ni q$.
Let
$R \in \rho \setminus \crly{S}$.
There exists $p \in R$.
Because $S,\rnd{s,r} \in \mathcal C^t$,
$S \cup \rnd{s,r} \in \mathcal C^t$;
because
$\card{h_i\rnd{S}} = \card{h_i\rnd{\rnd{s,r}}} = 1$,
and
$S \cap \rnd{s,r} \neq \emptyset$,
$\card{h_i\rnd{S \cup \rnd{s,r}}} = 1$.
Thus,
by $S \in \pi_i^t\rnd{h}$ and the definition of $\pi_i^t\rnd{h}$,
$S \cup \rnd{s,r} = S$.
Hence,
$S \supset \rnd{s,r}$.
Thus,
$p \notin \rnd{s,r}$.
Hence,
$r \leq p$.
Note that
$q \in \rnd{s,r}$,
and
thus,
$q < r$.
Then,
$q < p$.
Note that
by Lemma \ref{lem:totally_ordered_partition},
$S < R$,
or
$R < S$.
Then,
$S < R$.
Thus,
$S$ is a minimum of $\rho$.

\paragraph{Satisfaction of Axiom \ref{ax:initial_uniqueness}}

Let
$t \in T$ such that
$T_{> t} \neq \emptyset$
and
$h,g \in \ch_i^t\rnd{\sigma_i}$ such that
$h^t = g^t$.
There exists $s \in T_{> t}$.
Let
$
 S
 :=
 \set
 {
  r \in \rnd{t,s}
 }
 {
  h_i\rnd{r} \neq z_i
  \vee
  g_i\rnd{r} \neq z_i
 }
$.
By Lemma \ref{lem:increment_of_uniqueness'},
$h_i\rnd{t} = g_i\rnd{t}$.
Thus,
it suffices to show that
there exists $r \in T_{> t}$ such that
for any $q \in \rnd{t,r}$,
$h_i\rnd{q} = g_i\rnd{q}$.
   
Consider the case where
$S = \emptyset$.
Then,
for any $q \in \rnd{t,s}$,
$h_i\rnd{q} = z_i = g_i\rnd{q}$.

Consider the case where
$S \neq \emptyset$.
Then,
by Axiom \ref{ax:frictionality},
there exists a minimum $r$ of $S$.
Thus,
for any $q \in \rnd{t,r}$,
$h_i\rnd{q} = z_i = g_i\rnd{q}$.
\qed

\newpage
\bibliographystyle{plainnat}
\bibliography{Strategies_in_Totally_Ordered_Times_1}

\end{document}